%
%
%
%
%

\RequirePackage{fix-cm}
\documentclass[smallcondensed]{svjour3} 
\smartqed
\usepackage{amsmath}
\usepackage{mathrsfs,amssymb}
\usepackage{lineno,hyperref}
\newcommand{\hs}{\hspace}
\newcommand{\hh}{\hspace{0.5mm}}
\newcommand{\ov}{\overline}
\newcommand{\ti}{\tilde}
\newcommand{\mR}{\mathbb{R}}

\newcommand{\mP}{\mathbb{P}}
\newcommand{\mE}{\mathbb{E}}
\newcommand{\mL}{\mathbb{L}}
\newcommand{\mD}{\mathbb{D}}
\newcommand{\mF}{\mathscr{F}}
\newcommand{\mA}{\mathcal{A}}
\newcommand{\nn}{\nonumber}
\newcommand{\Si}{\Sigma}
\newcommand{\si}{\sigma}
\newcommand{\ep}{\varepsilon}
\newcommand{\de}{\delta}
\newcommand{\lam}{\lambda}
\newcommand{\ph}{\varphi}
\newcommand{\pa}{\partial}
\newcommand{\lab}{\left|}
\newcommand{\rab}{\right|}
\newcommand{\lno}{\left| \hspace{-0.3mm} \left|}
\newcommand{\rno}{\right| \hspace{-0.3mm} \right|}
\newcommand{\la}{\left(}
\newcommand{\ra}{\right)}

  \makeatletter
    
    \@addtoreset{equation}{section}
  \makeatother
\def\Xint#1{\mathchoice
 {\XXint\displaystyle\textstyle{#1}}%
 {\XXint\textstyle\scriptstyle{#1}}%
 {\XXint\scriptstyle\scriptscriptstyle{#1}}%
 {\XXint\scriptscriptstyle\scriptscriptstyle{#1}}%
 \!\int}
 \def\XXint#1#2#3{{\setbox0=\hbox{$#1{#2#3}{\int}$}
 \vcenter{\hbox{$#2#3$}}\kern-.5\wd0}}
 
 \def\dashint{\Xint-}
\modulolinenumbers[5]
\allowdisplaybreaks[2]

\begin{document}

\title{A stochastic conservation law with nonhomogeneous Dirichlet boundary conditions}


\author{Kazuo Kobayasi \and Dai Noboriguchi}


\institute{K. Kobayasi 
\at Department of Mathematics, Education and Integrated Arts and Sciences, Waseda University, 1-6-1 Nishi-Waseda, Shinjuku-ku, Tokyo, 169-8050, Japan  \\
\email{kzokoba@waseda.jp}
\and
D. Noboriguchi 
\at Graduate School of Education, Waseda University, 1-6-1 Nishi-Waseda, Shinjuku-ku, Tokyo, 169-8050, Japan \\
\email{588243-dai@fuji.waseda.jp}}

\date{ \mbox{} \\[-4mm] \noindent Accepted for publication in Acta Mathematica Vietnamica \\ Received: date / Accepted: date}

\maketitle

\begin{abstract}
This paper discusses the initial-boundary value problem (with a nonhomogeneous boundary condition) for a multi-dimensional scalar first-order conservation law with a multiplicative noise. One introduces a notion of kinetic formulations in which the kinetic defect measures on the boundary of a domain are truncated. In such a kinetic formulation one obtains a result of uniqueness and existence. The unique solution is the limit of the solution of the stochastic parabolic approximation.
\keywords{Stochastic partial differential equations \and Conservation laws \and Kinetic formulation \and Initial-boundary value problem}
\subclass{35L04 \and 60H15}
\end{abstract}

\section{Introduction}
In this paper we study the first order stochastic conservation law of the following type
\begin{eqnarray}
du + {\rm div} (A(u)) dt = \Phi (u) dW(t) \hs{5mm} {\rm in } \hs{2mm} \Omega \times Q, \label{SSCL1}
\end{eqnarray}with the initial condition
\begin{eqnarray}
u(0 , \cdot ) = u_0(\cdot) \hs{5mm} {\rm in} \hs{2mm} \Omega \times  D, \label{SSCL2}
\end{eqnarray}and the formal boundary condition
\begin{eqnarray}
 ``u = u_b\mbox{''} \hs{5mm} {\rm on } \hs{2mm}  \Omega \times \Si. \label{SSCL3}
\end{eqnarray}Here $D \subset \mR^d$ is a bounded domain with a Lipschitz boundary $\pa D$, $T > 0$, $Q = (0,T) \times D$, $\Si = (0,T) \times \pa D$ and $W$ is a cylindrical Wiener process defined on a stochastic basis $(\Omega , \mF , (\mF_t) , P)$. More precisely, $(\mF_t)$ is a complete right-continuous filtration and $ W(t) = \sum_{k=1}^{\infty} \beta_k (t) e_k$ with $(\beta_k)_{k \geq 1}$ being mutually independent real-valued standard Wiener processes relative to $(\mF_t)$ and $(e_k)_{k \geq 1}$ a complete orthonormal system in a separable Hilbert space $H$ (cf. \cite{c} for example). Our purpose of this paper is to present a definition of kinetic solution to the initial-boundary value problem (\ref{SSCL1})-(\ref{SSCL3}) and to prove a result of uniqueness and existence of such a solution. \\
\indent In the deterministic case of $\Phi = 0$, the problem has been extensively studied. It is well known that a smooth solution is constant along characteristic lines, which can intersect each other and shocks can occur. Moreover, when the characteristic intersects both $\{ 0 \} \times D$ and $\Si$, the problem (\ref{SSCL1})-(\ref{SSCL3}) would be overdetermind if (\ref{SSCL3}) were assumed in the usual sense. Thus, an appropriate frameworks of entropy solutions, together with entropy-boundary conditions, has been considered to obtain uniqueness and existence results. In the BV setting Bardos, Le Roux and N${\rm \acute{e}}$d${\rm \acute{e}}$lec \cite{f} first gave an interpretation of the boundary condition (\ref{SSCL3}) as an "entropy" inequality on $\Si$, which is the so-called BLN condition, and proved the well-posedness of the initial-boundary value problem. Otto \cite{i} extended it to the $L^\infty$ setting by introducing the notion of boundary entropy-flux pairs. Imbert and Vovelle \cite{b} gave a kinetic formulation of weak entropy solutions of the initial-boundary value problem and proved the uniqueness of such a kinetic solution. Concerning deterministic degenerate parabolic equations, see \cite{o} and \cite{p}. \\
\indent To add a stochastic forcing $\Phi (u) dW(t)$ is natural for applications, which appears in wide variety of fields as physics, engineering and others. The Cauchy problem for the stochastic conservation law (\ref{SSCL1}) with additive noise has been studied in \cite{l}, with multiplicative noise in \cite{d}, where the uniqueness of the strong entropy solution is proved in any dimension, the existence in one dimension. Also see \cite{n} for the existence of strong entropy solutions in any dimension.\\
\indent By using a kinetic formulation the well-posedness for kinetic solution to scalar conservation laws with a general multiplicative noise in a $d$-dimensional torus was obtained by Debussche and Vovelle \cite{a}. The main advantage from using kinetic formulations developed by Lions, Perthame and Tadmor for deterministic case \cite{e} is that the formulation keeps track of the dissipation of noise by solutions and works in the $L^1$ setting. Those results have been extended to the case of degenerated parabolic stochastic equations. in \cite{t} \\
\indent There are a few paper concerning the Dirichlet boundary value problem for stochastic conservation laws. Vallet and Wittbold \cite{s} extended the result of Kim \cite{l} to the multi-dimensional Dirichlet problem with additive noise. In the recent paper \cite{k}, Bauzet, Vallet and Wittbold studied the Dirichlet problem in the case of multiplicative noise under the restricted assumption that the flux function $A$ is global Lipschitz. In \cite{s} and \cite{k} the boundary condition is formulated in the sense of Carrillo, which consists in formulating the boundary condition by inequalities involving the semi-Kru${\rm \check{z}}$kov entropies.\\
\indent Our main results are counterparts of the results in \cite{a} in the case of initial-boundary value problems. The flux function $A$ is supposed to have the  bounded second derivatives (see Theorem \ref{existence1} below). Thus, an important example of inviscid Burgers' equation can be included. Moreover, in the homogeneous boundary case, i.e., in the case of Dirichlet's (zero) boundary condition, one can assume only that $A$ is of class $C^2$ and its derivatives have at most polynomial growth (see Theorem \ref{existence2} below). \\
\indent Although the basic idea of the proof is analogous to that of \cite{a} and \cite{b}, the stochastic case is significantly different from the deterministic case. A ``stochastic'' kinetic solution $u$ might blow up at the boundary $\pa D$ even if the data $u_0$, $u_b$ in (\ref{SSCL2}), (\ref{SSCL3}) are bounded. Let us make some more comments on those points. In \cite{b} the defect measures $\bar{m}^{\pm}$ (which are denoted by $m^b_{\pm}$ there) on the boundary $\Si \times \mR_\xi$ play an important role. In particular, it is crucial that $\bar{m}^+$ (resp. $\bar{m}^-$) vanishes for $\xi >>1$ (resp. $\xi <<-1$) in the proof of uniqueness. This property for $\bar{m}^\pm$ comes from the boundedness of the weak entropy solutions. To the contrary, in the stochastic case we have no pathwise $L^\infty$ estimate of kinetic (entropy) solutions $u(t) $ even though both of initial datum $u_0$ and boundary datum $u_b$ belong to $L^\infty$ : It is known only that $\mE \sup_{0\leq t \leq T} \lno u (t) \rno_{L^p (D)}$ is finite for every $p \in [1, \infty )$ and hence we are not able to obtain that the boundary defect measures $\bar{m}^\pm$ vanish as $\xi$ goes to infinity. To overcome this difficulty we introduce a notion of "renormalized" kinetic formulations (Definition \ref{renoma} below), where $\bar{m}^\pm$ are cut off or renormalized on each finite interval $(-N,N)$ of $\mR_\xi$, and we prove the uniqueness of such a renormalized kinetic solution. \\
\indent We now give the precise conditions under which the uniqueness of renormalized kinetic solutions will be proved.
\begin{description}
\item[${\rm (H_1)}$] The flux function $A$: $\mR \to \mR^d$ is of class $C^2$ and its derivatives have at most polynomial growth.
\item[${\rm (H_2)}$] For each $z \in L^2 (D)$, $\Phi (z) : H \to L^2 (D)$ is defined by $\Phi (z) e_k = g_k (\cdot , z (\cdot))$, where $g_k \in C (D \times \mR)$ satisfies the following conditions:
\begin{gather}
 G^2 (x,\xi) = \sum_{k=1}^{\infty} \lab g_k (x,\xi) \rab^2 \leq L (1 + \lab \xi \rab^2), \label{H21} \\
 \sum_{k=1}^{\infty} \lab g_k (x,\xi) - g_k (y , \zeta ) \rab^2 \leq L \la \lab x-y \rab^2 + \lab \xi - \zeta  \rab r(\lab \xi - \zeta \rab) \ra \label{H22}
\end{gather}for every $x,y \in D$, $\xi , \zeta \in \mR$. Here, $L$ is a constant and $r$ is a continuous nondecreasing function on $\mR_+$ with $r(0) = 0$. 
\item[${\rm (H_3)}$] $u_0 \in L^\infty (\Omega \times D )$ and is $\mF_0 \otimes \mathcal{B}(D)$-measurable. $u_b \in L^\infty ( \Omega \times \Si )$ and $\{ u_b (t) \}$ is predictable, in the following sense: For every $p \in [1,\infty)$, the $L^p (\pa D)$-valued process $\{ u_b (t) \}$ is predictable with respect to the filtration $(\mF_t)$.
\end{description}Note that by (\ref{H21}) one has
\begin{gather}
 \Phi : L^2 (D) \to L_2 (H ; L^2 (D)) , 
\end{gather}where $L_2 (H ; L^2 (D))$ denotes the set of Hilbert-Schmidt operators from $H$ to $L^2 (D)$.  \\
\indent The existence of kinetic solutions is proved under more strong conditions than the above ones which will be stated in the beginning of Section 4. \\
\indent This paper is organized as follows. In Section 2, we introduce the notion of kinetic solutions to (\ref{SSCL1})-(\ref{SSCL3}) by using the kinetic formulation, and give some useful lemmas concerning the weak traces on the boundary. In Section 3, we state the $L^1$-contraction (uniqueness) theorem as well as the reduction theorem and prove them. In Section 4, the existence of a kinetic solution is stated and is then proved.

\section{Kinetic solution and generalized kinetic solution}

We give the definition of solution in this section. We mainly follow the notations of \cite{a} and \cite{b}. We choose a finite open cover $\{ U_{\lam_i} \}_{i=0, \ldots , M}$ of $\ov{D}$ and a partition of unity $\{ \lam_i \}_{i=0, \ldots , M}$ on $\ov{D}$ subordinated to $\{ U_{\lam_i} \}$ such that $U_{\lam_0} \cap \pa D = \emptyset$, for $i=1, \ldots , M$,
\begin{gather*}
 D^{\lam_i} : = D \cap U_{\lam_i}  = \{ x \in U_{\lam_i} ; ( \mA_i x )_d > h_{\lam_i} (\ov{\mA_i x}) \} \hs{2mm} {\rm and} \\
 \pa D^{\lam_i} : = \pa D \cap U_{\lam_i}  = \{ x \in U_{\lam_i} ; ( \mA_i x )_d = h_{\lam_i} (\ov{\mA_i x}) \},
\end{gather*}with a Lipschitz function $h_{\lam_i} : \mR^{d-1} \to \mR$, where $\mA_i$ is an orthogonal matrix corresponding to a change of coordinates of $\mR^d$ and $\bar{y}$ stands for $(y_1, \ldots, y_{d-1})$ if $y \in \mR^d$. For the sake of clarity, we will drop the index $i$ of $\lam_i$ and we will suppose that the matrix $\mA_i$ equals to the identity. We also set $Q^\lam = (0,T) \times D^{\lam}$, $\Si^{\lam} = (0,T) \times \pa D^{\lam}$ and $\Pi^{\lam} = \{ \bar{x} ; x \in U^{\lam} \}$.\\
\indent To regularize functions that are defined on $D^{\lam}$ and $\mR$, let us consider a standard mollifier $\psi$ on $\mR$, that is, $\psi$ is a nonnegative and even function in $C^{\infty}_c((-1,1))$ such that $\int_\mR \psi = 1$. We set $\rho^\lam (x) = \Pi_{i=1}^{d-1} \psi(x_i) \psi(x_d - (L_\lam +1))$ for $x = (x_1, \ldots , x_d)$ with the Lipschitz constant $L_{\lam}$ of $h_ \lam$ on $\Pi^{\lam}$. For $\ep , \de > 0$ we set $\rho_\ep^{\lam} (x) = \frac{1}{\ep^d}\rho^\lam (\frac{x}{\ep})$ and $\psi_\de (\xi) = \frac{1}{\de} \psi (\frac{\xi}{\de})$.

\begin{definition}[Kinetic measure] A maps $m$ from $\Omega$ to $\mathcal{M}^+_b ([0,T) \times D \times \mR)$, the set of non-negative finite measures over $[0,T) \times D \times \mR$, is said to be a kinetic measure if
\begin{enumerate}
\renewcommand{\labelenumi}{(\roman{enumi})}
\item $m$ is weak measurable,
\item $m$ vanishes for large $\xi$: if $B_R^c = \{ \xi \in \mR : \lab \xi \rab \geq R \}$ then
 \begin{gather}
  \lim_{R \to \infty} \mE m \la [0,T) \times D \times B_R^c \ra = 0, \label{vanish}
 \end{gather} 
\item for all $\phi \in C_b (D \times \mR)$, the process
 \begin{gather}
  t \mapsto \int_{[0,t] \times D \times \mR} \phi (x, \xi) \hs{0.5mm} d m(s,x,\xi)
 \end{gather}is predictable.
\end{enumerate}
\end{definition}

\begin{definition}[Kinetic solution] \label{renoma} Let $u_0$ and $u_b$ satisfy ${\rm (H_3)}$. A measurable function $u: \Omega \times Q  \to \mR$ is said to be a kinetic solution of (\ref{SSCL1})-(\ref{SSCL3}) if $\{ u(t) \}$ is predictable, for all $p \geq 1$ there exists a constant $C_p \geq 0$ such that for a.e. $t \in [0,T]$,
\begin{gather}
    \lno u(t) \rno_{L^p (\Omega \times D)} \leq C_p, \label{esti23}
\end{gather}there exists a kinetic measure $m$ and if, for any $N>0$, there exist nonnegative $\bar{m}^\pm_N \in L^1 (\Omega \times \Si \times (-N,N) )$ such that $\{ \bar{m}^\pm_N (t) \}$ are predictable, $\bar{m}^+_N (N-0) = \bar{m}^-_N (-N+0) = 0$ for sufficiently large $N$ and $f_+ : = {\bf 1}_{u > \xi}$, $f_- := f_+-1=-{\bf 1}_{u \leq \xi}$ satisfy: for all $\ph \in C^\infty_c ([0,T) \times \ov{D} \times (-N,N) )$,
   \begin{eqnarray}
    & & \hs{-11mm} \int_{Q \times \mR} f_\pm (\pa_t + a(\xi) \cdot \nabla) \ph \hs{0.5mm} d\xi dx dt \nonumber \\
    & &  + \int_{D \times \mR} f^0_\pm \ph(0) \hs{0.5mm} d\xi dx + M_N \int_{\Si \times \mR} f^b_\pm \ph \hs{0.5mm} d\xi d\si (x) dt \nonumber \\
    & & \hs{-11mm} = - \sum_{k = 1}^\infty \int_0^T \int_{D} g_k ( x, u(t,x) ) \hh \ph (x,t,u(t,x)) \hh dx d\beta_k(t) \nonumber \\
    & &  -\frac{1}{2} \int_Q \pa_{\xi} \ph (x,t,u(t,x)) G^2 (x,u(t,x)) \hh dx dt \nonumber \\
    & &  +\int_{[0,T) \times D \times \mR} \pa_{\xi} \ph \hs{0.5mm} dm + \int_{\Si \times \mR} \pa_{\xi} \ph \hh \bar{m}^\pm_N \hh d\xi d\si (x) dt \hs{5mm} {\rm a.s.}, \label{KIFO}
   \end{eqnarray}where $a(\xi) = A'(\xi)$, $M_N = \max_{-N \leq \xi \leq N} \lab a(\xi) \rab$. In (\ref{KIFO}), $f^0_+ = {\bf 1}_{u_0 > \xi}$, $f^b_+ = {\bf 1}_{u_b > \xi}$, $f^0_- = f^0_+ -1$ and $f^b_- = f_+^b -1$.
\end{definition}

For the sake of the proof of existence of kinetic solution, it is useful to introduce the notion of generalized kinetic solution. We start with the definition of kinetic function.

\begin{definition}[Kinetic function]  Let $(X , \mu)$ be a finite measure space. We say that a measurable function $f_+ : X \times \mR \to [0,1]$ is a kinetic function if there exists a Young measure $\nu$ on $X$ such that for every $p \geq 1$,
\begin{gather}
 \int_X \int_\mR \lab \xi \rab^p d \nu_z (\xi) d \mu (z) < + \infty \label{Young}
\end{gather} and for $\mu$-a.e. $z\in X $, for all $\xi \in \mR$,
\begin{gather*}
 f(z , \xi ) = \nu_z (\xi, + \infty).
\end{gather*} \hh \\
Here we recall that a Young measure $\nu$ on $X$ is a weak measurable mapping $z \mapsto \nu_z$ from $X$ into the space of probability measures on $\mR$. For a kinetic function $f_+ : X \times \mR \to [0,1]$ we denote the conjugate function by $f_- = f_+ -1$. Observe that if $f_+ = {\bf 1}_{u > \xi}$, then it is a kinetic function with the corresponding Young measure $\nu = - \de_{u = \xi}$, the Dirac measure centered at $u$, and its conjugate $f_- = - {\bf 1}_{u \leq \xi}$.\\
\indent We introduce the definition of generalized kinetic solution.
\end{definition}

\begin{definition}[Generalized kinetic solution] Let $u_0$ and $u_b$ satisfy ${\rm (H_3)}$. A measurable function $f_+ : \Omega \times Q \times \mR \to [0,1]$ is said to be a generalized kinetic solution of (\ref{SSCL1})-(\ref{SSCL3}) if the following conditions (i)-(iii) hold:
 \begin{enumerate}
  \renewcommand{\labelenumi}{(\roman{enumi})}
  \item $\{ f_+(t) \}$ is predictable. 
  \item $f_+$ is a kinetic function with the associated Young measure $\nu$ on $\Omega \times Q$ such that for all $p \geq 1$, there exists $C_p \geq 0$ satisfying that for a.e. $t \in [0,T]$,
   \begin{gather}
    \mE \int_D \int_{\mR} \lab \xi \rab^p d\nu_{t,x} (\xi) dx \leq C_p. \label{esti26}
   \end{gather}
  \item There exists a kinetic measure $m$ and, for any $N>0$, there exist nonnegative $\bar{m}^\pm_N \in L^1 (\Omega \times \Si \times (-N,N) )$ such that $\{ \bar{m}^\pm_N (t) \}$ are predictable, $\bar{m}^+_N (N-0) = \bar{m}^-_N (-N+0) = 0$ for sufficiently large $N$ and for all $\ph \in C_c^{\infty} ([0,T) \times \ov{D} \times (-N,N))$,
\begin{eqnarray}
    & & \hs{-11mm} \int_{Q \times \mR} f_\pm (\pa_t + a(\xi) \cdot \nabla) \ph \hs{0.5mm} d\xi dx dt \nonumber \\
    & &  + \int_{D \times \mR} f^0_\pm \ph(0) \hs{0.5mm} d\xi dx + M_N \int_{\Si \times \mR} f^b_\pm \ph \hs{0.5mm} d\xi d\si (x) dt \nonumber \\
    & & \hs{-11mm} = - \sum_{k = 1}^\infty \int_0^T \int_{D \times \mR} g_k (x,\xi) \hh \ph (t,x,\xi) \hh d \nu_{t,x} (\xi) dx d\beta_k(t) \nonumber \\
    & &  -\frac{1}{2} \int_{Q \times \mR} G^2 (x,\xi) \hh  \pa_{\xi} \ph (t,x,\xi)  \hh d \nu_{t,x} (\xi) dx dt \nonumber \\
    & &  +\int_{[0,T) \times D \times \mR} \pa_{\xi} \ph \hs{0.5mm} dm + \int_{\Si \times \mR} \pa_{\xi} \ph \hh \bar{m}^\pm_N \hh d\xi d\si (x) dt  \hs{5mm} {\rm a.s.} \label{GKIFO}
   \end{eqnarray}
 \end{enumerate}
\end{definition}

\begin{remark} In the case that the boundary function $\bar{m}^+$ satisfies $\bar{m}^+(-N+0)=0$ in addition, the equality (\ref{GKIFO}) for $f_-$ follows from that for $f_+$ if we set $\bar{m}^- (\xi) = \bar{m}^+ (\xi) + M_N(\xi + N) + (A (\xi ) - A (-N)) \cdot {\bf n} (x)$. In the case of periodic boundary condition as in \cite{a} the boundary function $\bar{m}^+$ does not appear. Thus, in these cases, it is enough to consider the equality (\ref{GKIFO}) only for $f_+$ in the definition of generalized kinetic solutions.
\end{remark}

The following proposition due to \cite[Proposition 8]{a} shows that any generalized kinetic solution admits left and right limits at every $t \in [0,T]$.\\

\begin{lemma} \label{abcdefg} {\it Let $f_+$ be a generalized kinetic solution of }(\ref{SSCL1})-(\ref{SSCL3}){\it. Then $f_+$ admits almost surely left and right limits at all points $t^* \in [0,T]$ in the following sense: For all $t^* \in [0,T]$ there exist some kinetic functions $f_+^{*,\pm}$ on $\Omega \times D \times \mR$ such that $\mP$-a.s., }
\begin{gather*}
 \int_{D \times \mR} f_+ (t^* \pm \ep) \ph \hh d\xi dx \to \int_{D \times \mR} f_+^{*, \pm} \ph \hh d\xi dx
\end{gather*}{\it as $\ep \to +0$ for all $\ph \in C_c^1 (D \times \mR) $. Moreover, almost surely, $f_+^{*,+} = f_+^{*,-}$ for all $t^* \in [0,T]$ except some countable set.}
\end{lemma}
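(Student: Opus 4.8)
The plan is to adapt the proof of \cite[Proposition 8]{a} to the present setting, using crucially that the test functions $\phi$ in the statement are supported in the \emph{interior} $D\times\mR$, so that the two boundary contributions in (\ref{GKIFO}) are inactive and the analysis reduces to the interior situation treated in \cite{a}.

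First I would fix $\phi\in C_c^1(D\times\mR)$ and choose $N>0$ so large that the $\xi$-support of $\phi$ is contained in $(-N,N)$. Testing (\ref{GKIFO}) against product functions $\ph(t,x,\xi)=\alpha(t)\phi(x,\xi)$ with $\alpha\in C_c^\infty([0,T))$, I note that $\phi$, and hence $\pa_\xi\phi$, vanishes in a neighbourhood of $\pa D$; consequently both the term $M_N\int_{\Si\times\mR}f^b_+\ph$ and the boundary-measure term $\int_{\Si\times\mR}\pa_\xi\ph\,\bar{m}^{\pm}_N$ drop out, and we recover exactly the interior identity of \cite{a}. Setting
\[
 F_\phi(t):=\int_{D\times\mR} f_+(t,x,\xi)\,\phi(x,\xi)\,d\xi\,dx ,
\]
the surviving relation identifies the distributional time derivative of $F_\phi$ on $[0,T)$ as the sum of the transport term $\langle f_+(t),a(\xi)\cdot\nabla\phi\rangle$, the drift term $-\tfrac12\int_{D\times\mR}G^2\pa_\xi\phi\,d\nu_{t,x}\,dx$, the increment of a stochastic integral against $(\beta_k)$, and the increment of the interior kinetic-measure functional $t\mapsto\int_{[0,t]\times D\times\mR}\pa_\xi\phi\,dm$.

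I would then read off pathwise regularity of $F_\phi$. Because $\phi$ has compact $\xi$-support and $|f_+|\le1$, the transport and drift terms are bounded in $t$ (for the drift one also uses (\ref{H21})), so they contribute an absolutely continuous function; the stochastic term has a continuous modification; and, writing $\pa_\xi\phi=(\pa_\xi\phi)^+-(\pa_\xi\phi)^-$ and using $m\ge0$, the kinetic-measure functional is a difference of two nondecreasing functions, hence of bounded variation. Thus $F_\phi$ coincides almost everywhere with a function $\tilde F_\phi$ possessing left and right limits at every $t^*\in[0,T]$, and $|F_\phi(t)|\le\|\phi\|_{L^1(D\times\mR)}$ since $0\le f_+\le1$. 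Choosing a countable family $\mathcal D\subset C_c^\infty(D\times\mR)$ whose members approximate every element of $C_c^1(D\times\mR)$ in $L^1(D\times\mR)$, I obtain a single event of full probability on which $\tilde F_\phi(t^*\pm0)$ exists for every $\phi\in\mathcal D$ and every $t^*$; the uniform $L^1$-bound then upgrades this, via uniform approximation, to the existence of $F_\phi(t^*\pm0)$ for \emph{all} $\phi\in C_c^1(D\times\mR)$.

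It remains to identify these one-sided limits as kinetic functions and to prove their coincidence off a countable set. For fixed $t^*$ and sign, the functions $f_+(t^*\pm\ep)$ are kinetic functions bounded in $[0,1]$; by weak-$*$ compactness together with the uniform moment estimate (\ref{esti26})---which is exactly what prevents mass from escaping to $\xi=\pm\infty$---any weak-$*$ limit is again a kinetic function, and since $F_\phi(t^*\pm\ep)$ converges for all $\phi$ the whole family converges to a kinetic function $f_+^{*,\pm}$ with $\int f_+^{*,\pm}\phi=\lim_{\ep\to+0}F_\phi(t^*\pm\ep)$. Finally, each $\tilde F_\phi$ with $\phi\in\mathcal D$ is of bounded variation and so has at most countably many discontinuities; their union $S$ over the countable set $\mathcal D$ is countable, and for $t^*\notin S$ we have $\tilde F_\phi(t^*+0)=\tilde F_\phi(t^*-0)$ for all $\phi\in\mathcal D$, hence for all $\phi\in C_c^1(D\times\mR)$ by density, whence $f_+^{*,+}=f_+^{*,-}$. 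The main obstacle I anticipate is this last identification: securing a single null set valid for all $\phi$ and all $t^*$ simultaneously, and checking that the weak-$*$ limits retain the Young-measure representation and the moment bound, which is where (\ref{esti26}) is essential.
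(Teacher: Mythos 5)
The paper does not prove this lemma at all: it simply attributes it to \cite[Proposition 8]{a}, and your argument is a correct reconstruction of that proof with exactly the right adaptation to the bounded-domain setting, namely that test functions supported in the interior of $D$ annihilate both boundary terms in (\ref{GKIFO}), reducing everything to the decomposition of $F_\ph$ into an absolutely continuous part, a continuous martingale, and a BV part coming from the kinetic measure. No gaps; this matches the intended approach.
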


In what follows, for a generalized kinetic solution $f_+$, we will define $f_+^\pm$ by $f_+^\pm (t^*) = f_+^{*, \pm}$ for $t^* \in [0,T]$.\\
\indent In order to prove uniqueness we need to extend test functions in (\ref{GKIFO}) to the class of $C_c^{\infty} ([0,T) \times \mR^d \times \mR)$. To this end we introduce the cutoff functions as follows.
\begin{eqnarray*}
& & \Psi_\eta^+ (\xi) = \int_0^{N-\xi} \psi_\eta (r-\eta) dr, \hs{4mm} \Psi_\eta^- (\xi ) = \int_0^{\xi + N} \psi_\eta (r-\eta) dr \\
& & {\rm and} \hs{4mm} \Psi_\eta (\xi) = \Psi_\eta^+ (\xi ) \Psi_\eta^- (\xi) , \hs{4mm} \eta >  0.\\
\end{eqnarray*}

\begin{proposition} \label{ggghhh} {\it Let $f_+$ be a generalized kinetic solution of }(\ref{SSCL1})-(\ref{SSCL3}){\it . Let $\bar{f}^{(\lam)}_\pm$ be any weak* limit of $\{ f^{\lam, \ep}_\pm \}$ as $\ep \to + 0$ in $L^\infty (\Si^\lam \times \mR)$ for any element $\lam$ of the partition of unity $\{ \lam_i \}$ on $\ov{D}$, where $f_{\pm}^{\lam , \ep}$ is defined by}
\begin{gather*}
 f_\pm^{\lam , \ep} (t,x,\xi) = \int_{D^\lam} f_\pm (t,y,\xi) \rho_\ep^\lam (y-x) dy,
\end{gather*}and let $\bar{f}_{\pm} = \sum_{i=0}^{M} \lam_i \bar{f}_\pm^{(\lam_i)}$.
\begin{enumerate}
  \renewcommand{\labelenumi}{(\roman{enumi})}
  \item {\it For a.s. there exists a full set $\mL$ of $\Si$ such that $\bar{f}_\pm (t,x,\xi)$ is non-increasing in $\xi$ for all $(t,x) \in \mL$.}
  \item {\it For any $\ph \in C_c^\infty (\mR^d \times \mR)$, for any $t \in [0,T)$ and for any $\eta > 0$,}
   \begin{eqnarray}
    & & \hs{-15mm} - \int_{D} \int_{-N}^N \Psi_{\eta} f^{+}_\pm (t) \ph d\xi dx + \int_0^t \int_D \int_{-N}^N  \Psi_{\eta} f_\pm a (\xi) \cdot \nabla \ph d\xi dx ds \nonumber \\
    & & \hs{-10mm} + \int_{D} \int_{-N}^N  \Psi_{\eta} f^0_\pm \ph d\xi dx + \int_0^t \int_D \int_{-N}^N \Psi_{\eta} (-a(\xi) \cdot {\bf n}) \bar{f}_\pm \ph d\xi d\si ds \nonumber \\
    & & \hs{-15mm} = - \sum_{k \geq 1} \int_0^t \int_{D} \int_{-N}^N \Psi_{\eta} g_k \hh \ph \hh d\nu_{x,s}(\xi) dx d\beta_k(s) \nonumber \\
    & & \hs{-10mm} - \frac{1}{2} \int_0^t \int_D \int_{-N}^N \Psi_{\eta} \pa_{\xi} \ph \hh G^2 \hh d\nu_{s,x}(\xi) dx ds + \int_{[0,t] \times D \times (-N,N)} \Psi_{\eta} \pa_{\xi} \ph \hh dm \nonumber \\
    & & \hs{-10mm} + \frac{1}{2} \int_0^t \int_D \int_{-N}^N \Bigl( \psi_{\eta} (N - \xi -\eta) - \psi_\eta (\xi + N - \eta) \Bigr) G^2 \hh \ph \hh d\nu_{s,x} (\xi) dx ds \nonumber \\
    & & \hs{-10mm} - \int_{[0,t] \times D \times (-N,N)} \Bigl( \psi_{\eta} (N - \xi -\eta) - \psi_\eta (\xi + N - \eta) \Bigr) \ph \hh dm \hs{4mm} {\it a.s..}  \label{VKIFO}
   \end{eqnarray}
  \item {\it $P$-a.s., for a.e. $(t,x) \in \Si$, the weak* limits $ \ -a(\xi) \cdot {\bf n} (\bar{x}) \bar{f}_\pm (t,x,\xi)$ coincide with $M_N  f^b_\pm (t,x,\xi) + \pa_{\xi} \bar{m}_N^\pm (t,x,\xi)$ for a.e. $\xi \in (-N,N)$.} \\
 \end{enumerate}
\end{proposition}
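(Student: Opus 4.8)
\emph{Plan.} All three assertions rest on the one-sided boundary mollification $f^{\lam,\ep}_\pm$, so I would begin by recording its key geometric property. Because the kernel $\rho^\lam$ is translated by $(L_\lam+1)$ in the $x_d$-direction, for every $x$ lying on or just below $\pa D^\lam$ the function $y\mapsto\rho_\ep^\lam(y-x)$ is supported strictly inside $D^\lam$. Hence $f^{\lam,\ep}_\pm$ is well defined and smooth in $x$ up to (and slightly beyond) the Lipschitz graph $\pa D^\lam$, so it admits a genuine trace there whose weak* limit points are, by definition, the $\bar{f}^{(\lam)}_\pm$; such limit points exist by Banach--Alaoglu since $f_\pm\in[0,1]$, and the proof is carried out along a fixed subsequence realizing any prescribed limit.

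For (i): recall that a kinetic function is non-increasing in $\xi$, since $f_+(t,y,\xi)=\nu_{t,y}(\xi,+\infty)$ is the tail of a probability measure, i.e. $-\pa_\xi f_\pm(t,y,\cdot)=\nu_{t,y}\geq0$. Convolving in $y$ against the nonnegative kernel $\rho^\lam_\ep$ preserves this, so $-\pa_\xi f^{\lam,\ep}_\pm\geq0$; since the distributional condition $-\pa_\xi(\cdot)\geq0$ is closed under weak* convergence, each $\bar{f}^{(\lam)}_\pm$, and therefore the convex combination $\bar{f}_\pm=\sum_i\lam_i\bar{f}^{(\lam_i)}_\pm$, is non-increasing in $\xi$. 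Passing to the right-continuous representative then produces the full set $\mL$.

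For (ii): I would first insert, for $x$ near $\pa D^\lam$, the admissible test function $\rho_\ep^\lam(\cdot-x)$ (multiplied by a function of $(t,\xi)$) into (\ref{GKIFO}); because this kernel is supported inside $D$, the boundary-data term $M_N\int_\Si f^b_\pm\ph$ drops out, and since $a(\xi)$ is independent of $x$ the transport operator commutes with the $x$-convolution, so by Fubini $f^{\lam,\ep}_\pm$ is a weak (hence, being $x$-smooth, strong-in-$x$) solution of the mollified kinetic equation in a neighbourhood of $\ov{D^\lam}$ with \emph{no} commutator error. I would then test this smooth-in-$x$ equation against $\lam(x)\ph(x,\xi)\Psi_\eta(\xi)\zeta(t)$, integrate by parts in $x$ over $D^\lam$ to produce the boundary integral $\int_{\pa D^\lam}(a(\xi)\cdot{\bf n})f^{\lam,\ep}_\pm\lam\ph$, and let $\ep\to0$: the interior terms converge to their $f_\pm$-counterparts, the mollified noise and kinetic-measure terms converge by the It\^o isometry together with ${\rm (H_2)}$ and by dominated convergence, and the boundary trace converges weak* to $\bar{f}^{(\lam)}_\pm$. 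Summing over the partition of unity ($\sum_i\lam_i=1$ on $\ov D$) replaces $\bar{f}^{(\lam_i)}_\pm$ by $\bar{f}_\pm$; the cutoff $\Psi_\eta$ is carried through, and the derivative $\pa_\xi\Psi_\eta=\psi_\eta(\xi+N-\eta)-\psi_\eta(N-\xi-\eta)$ striking the $G^2$- and $m$-terms generates exactly the last two lines of (\ref{VKIFO}). Finally, choosing $\zeta\to{\bf 1}_{[0,t]}$ and invoking Lemma \ref{abcdefg} to identify the right limit produces $-\int_D\int_{-N}^N\Psi_\eta f^+_\pm(t)\ph$. The hard part is precisely this step: making rigorous the trace of $f^{\lam,\ep}_\pm$ and the passage $\ep\to0$ in the boundary integral over the merely Lipschitz graph $\pa D^\lam$, while controlling the stochastic integral (whose convergence relies on ${\rm (H_2)}$ and the Hilbert--Schmidt bound) and the kinetic-measure term uniformly in $\ep$.

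For (iii): I would compare (\ref{VKIFO}) with the identity obtained by inserting the \emph{same} test function $\ph\Psi_\eta\zeta$ directly into (\ref{GKIFO})---legitimate, since $\ph\Psi_\eta$ is smooth and compactly supported in $\xi$, hence admissible even though $\ph$ need not vanish on $\pa D$---and letting $\zeta\to{\bf 1}_{[0,t]}$. All interior, stochastic, $G^2$- and $m$-terms, and all $\pa_\xi\Psi_\eta$-contributions, are identical in the two identities, so subtracting leaves only the identity of boundary terms: $\int_0^t\int_{\pa D}\int_{-N}^N\Psi_\eta(-a(\xi)\cdot{\bf n})\bar{f}_\pm\ph\,d\xi d\si ds$ equals $M_N\int_0^t\int_{\pa D}\int_{-N}^N\Psi_\eta f^b_\pm\ph\,d\xi d\si ds+\int_0^t\int_{\pa D}\int_{-N}^N\pa_\xi(\ph\Psi_\eta)\bar{m}^\pm_N\,d\xi d\si ds$. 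Integrating the last term by parts in $\xi$ (the endpoint contributions vanishing because $\bar{m}^+_N(N-0)=\bar{m}^-_N(-N+0)=0$ and $\Psi_\eta$ is supported in $(-N,N)$) and letting $\eta\to0$ so that $\Psi_\eta\uparrow{\bf 1}_{(-N,N)}$, the arbitrariness of $\ph$ and $t$ yields the pointwise identity $-a(\xi)\cdot{\bf n}\,\bar{f}_\pm=M_N f^b_\pm+\pa_\xi\bar{m}^\pm_N$ for a.e. $\xi\in(-N,N)$, as claimed.
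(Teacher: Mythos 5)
Your proposal is correct in substance and, despite the different packaging, is essentially the paper's argument: traces are defined through the decentered mollifier, (\ref{VKIFO}) is obtained by making the boundary term appear as the weak* limit of the mollified trace, and (iii) follows by comparing the resulting identity with (\ref{GKIFO}). The main presentational difference is in (ii): the paper never writes a mollified equation for $f^{\lam,\ep}_\pm$; instead it tests (\ref{GKIFO}) with the single function $\Psi_\eta\bar\Theta_\ep\phi^\lam$, where $\bar\Theta_\ep(x)=\int_0^{x_d-h_\lam(\bar x)}\psi_\ep(r-\ep(L_\lam+1))\,dr$, and observes that the term $\int f_\pm\ph^\lam a(\xi)\cdot\nabla\bar\Theta_\ep$ is literally the mollified trace integral appearing in the definition of $\bar f^{(\lam)}_\pm$. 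This sidesteps a point your sketch glosses over: (\ref{GKIFO}) holds a.s.\ for each \emph{fixed} test function, with a null set depending on that function, so ``inserting $\rho^\lam_\ep(\cdot-x)$ for every $x$'' and then integrating in $x$ must be reorganized (e.g.\ by Fubini, testing against $y\mapsto\int\lam(x)\ph(x,\xi)\Psi_\eta(\xi)\rho^\lam_\ep(y-x)\,dx$) before you may integrate by parts on one common event of full probability. Two further points should be made explicit in (iii): at the stage where you ``integrate by parts in $\xi$'' the derivative $\pa_\xi\bar m^\pm_N$ is only a distribution (the endpoint contributions vanish because $\Psi_\eta(\pm N)=0$, not because of $\bar m^+_N(N-0)=0$), and the promotion of the resulting distributional identity $\pa_\xi\bar m^\pm_N=-a(\xi)\cdot{\bf n}\,\bar f_\pm-M_Nf^b_\pm$ to an a.e.-pointwise one requires observing that the right-hand side lies in $L^1$, so that $\bar m^\pm_N(t,x,\cdot)$ is absolutely continuous for a.e.\ $(t,x)$ --- this is exactly the paper's appeal to Nikodym's theorem, and it is needed even to make sense of the statement. (The paper also avoids your $\eta\to0$ limit by restricting to $\ph$ supported in $\xi\in(-N+\ep,N-\ep)$, where $\Psi_\eta\ph=\ph$ and the $\psi_\eta$-corrections vanish identically.) Your argument for (i) via distributional monotonicity preserved under convolution and weak* limits is equivalent to the paper's Lebesgue-point argument.
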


\begin{proof} For a.s. let us denote by $\mL_\mR$ the set of Lebesgue points of $\bar{f}_\pm \in L^\infty (\Si \times \mR) $. Take $(t,x,\xi_i) \in \mL_\mR$, $i=1,2$, arbitrarily so that $\xi_1 < \xi_2$. If $\ep , \de > 0$ are sufficiently small, then the average of $f_\pm^{\lam , \ep}$ on $B^i_\de$ satisfy
\begin{gather*}
 \dashint_{B^1_\de} f^{\lam , \ep}_\pm (s,y,\xi) d\xi d \si (y) ds \geq \dashint_{B^2_\de} f_\pm^{\lam , \ep} (s,y,\xi) d\xi d \si (y) ds,
\end{gather*}where $B^i_\de$ denotes the ball with center $(t,x,\xi_i)$ and radius $\de$. Passing to a weak* limit $\bar{f}_\pm^{(\lam)}$ as $\ep_n \to +0$ through some subsequence $\{ f^{\lam , \ep_n}_\pm \}$, we have
\begin{gather*}
 \dashint_{B^1_\de} \bar{f}_\pm^{(\lam)} (s,y,\xi) d\xi d \si (y) ds \geq \dashint_{B^2_\de} \bar{f}_\pm^{(\lam)} (s,y,\xi) d\xi d \si (y) ds.
\end{gather*}Letting $\de \to +0$ and summing over $i$ yield $\bar{f}_\pm (t,x,\xi_1) \geq \bar{f}_\pm (t,x,\xi_2) $ because $(t,x,\xi_i) \in \mL_\mR$. Consequently, setting $\mL = \{ (t,x): (t,x,\xi) \in \mL_\mR \ {\rm for \ a.e.} \ \xi \}$, we obtain the claim of (i).\\
\indent To prove (ii) we take $\phi \in C_c^\infty ([0,T) \times \mR^d \times \mR)$. Putting $\ph = \Psi_\eta \bar{\Theta}_\ep \phi^\lam$ in (\ref{GKIFO}), where $\phi^\lam = \phi \lam$ and
\begin{gather*}
 \bar{\Theta}_\ep (x) = \int^{x_d - h_\lam (\bar{x})}_0 \psi_\ep (r - \ep (L_\lam + 1)) dr,
\end{gather*}we obtain at the limit $\ep \to +0$
 \begin{eqnarray}
    & & \hs{-10mm} \int_{Q^\lam} \int_{-N}^N  \Psi_{\eta} f_\pm (\pa_t + a(\xi) \cdot \nabla) \phi^\lam d\xi dx dt + \int_{D^{\lam}} \int_{-N}^N  \Psi_{\eta} f^0_\pm \phi^\lam(0) d\xi dx \nonumber \\
    & & + \int_0^T \int_{\Pi^\lam} \int_{-N}^N \Psi_\eta (-a(\xi) \cdot {\bf n}_\lam) \bar{f}^{(\lam)}_\pm \phi^\lam d\xi d\bar{\si}_\lam dt \nonumber \\
    & & \hs{-10mm} \hs{2mm} = - \sum_{k \geq 1} \int_0^T \int_{D^\lam} \int_{-N}^N \Psi_{\eta} g_k \hh \phi^\lam \hh d\nu_{x,t}(\xi) dx d\beta_k(t) \nonumber \\
    & & \hs{-10mm} \hs{5.3mm} - \frac{1}{2} \int_{Q^\lam} \int_{-N}^N \Psi_{\eta} \pa_{\xi} \phi^\lam \hh G^2 \hh d\nu_{x,t}(\xi) dx dt + \int_{[0,T) \times D^\lam \times (-N,N)} \Psi_{\eta} \pa_{\xi} \phi^\lam \hh dm \nonumber \\
    & & \hs{-10mm} \hs{5.3mm} + \frac{1}{2} \int_{Q^\lam} \int_{-N}^N \Bigl( \psi_{\eta} (N - \xi -\eta) - \psi_\eta (\xi + N - \eta) \Bigr) G^2 \hh \phi^\lam \hh d\nu_{x,t} (\xi) dx dt \nonumber \\
    & & \hs{-10mm} \hs{5.3mm} - \int_{[0,T) \times D^\lam \times (-N,N)} \Bigl( \psi_{\eta} (N - \xi -\eta) - \psi_\eta (\xi + N - \eta) \Bigr) \phi^\lam \hh dm \hs{4mm} {\rm a.s.},  \label{VKF}
   \end{eqnarray}where
   \begin{gather*}
    {\bf n}_\lam (\bar{x}) = \frac{1}{\sqrt{1+ \lab \nabla_{\bar{x}} h_\lam (\bar{x}) \rab^2 } } (\nabla_{\bar{x}} h_\lam (\bar{x}) , -1), \\
    d\bar{\si}_\lam (\bar{x}) = \sqrt{1+ \lab \nabla_{\bar{x}} h_\lam (\bar{x}) \rab^2} d\bar{x}.
   \end{gather*}In this procedure it will be enough to consider the term
\begin{eqnarray*}
 & & \hs{-2mm} \int_{Q^\lam} \int_{-N}^N \Psi_\eta f_\pm \ph^\lam a (\xi) \cdot \nabla \bar{\Theta}_\ep d\xi dx dt \\
 & & \hs{-2mm} = - \int_0^T \int_{\Pi^\lam} \int_{-N}^N \Psi_\eta a (\xi) \cdot {\bf n}_\lam \int_\mR f_\pm \ph^\lam \rho_\ep (x_d - h_\lam (\bar{x}) - \ep (L_\lam + 1)) dx_d  d\xi d\bar{\si}_\lam dt \\
 & & \hs{-2mm} = - \int_0^T \int_{\Pi^\lam} \int_{-N}^N \Psi_\eta a (\xi) \cdot {\bf n}_\lam \int_{D^\lam} f_\pm (y) \ph^\lam (y) \rho_\ep^\lam (y-x) dy  d\xi d\bar{\si}_\lam dt,
\end{eqnarray*}
which is convergent to
\begin{gather*}
 - \int_0^T \int_{\Pi^\lam} \int_{-N}^N \Psi_\eta a (\xi) \cdot {\bf n}_\lam \bar{f}^{(\lam)}_\pm \phi^\lam d\xi d \bar{\si}_\lam dt,
\end{gather*}with any weak* limit $\bar{f}^{(\lam)}_\pm$ of $\{ f^{\lam , \ep}_\pm \}$ as a corresponding subsequence $\ep_n \to 0$. Let $\ph \in C_c^\infty (\mR^d \times \mR)$. Take a sequence $\{ \ph_n \} \subset C_c^\infty ([0,T) \times \mR^d \times \mR)$ of test function in (\ref{VKF}) which is approximate to ${\bf 1}_{[0,t)}(s) \ph$. By letting $n \to \infty$ and by summing over $i$, we obtain (\ref{VKIFO}) and hence the proof of (ii) is complete. \\
\indent Finally we show (iii). We fix small $\ep > 0$ arbitrarily and take $\ph \in C_c^\infty ([0,T) \times \mR^d \times (-N+\ep , N-\ep))$. Since $(\psi_\eta (N - \xi - \eta) - \psi_\eta (\xi + N - \eta)) \ph = 0$ and $\Psi_\eta \ph = \ph$ for all sufficiently small $\eta > 0$, (\ref{VKF}) deduces
\begin{eqnarray}
    & & \hs{-10mm} \int_{Q} \int_\mR  f_\pm (\pa_t + a(\xi) \cdot \nabla) \ph d\xi dx dt + \int_{D} \int_\mR  f^0_\pm \ph(0) d\xi dx \nonumber \\
    & & + \int_{\Si } \int_\mR  -a(\xi) \cdot {\bf n} \bar{f}_\pm \ph d\xi d\si dt \nonumber \\
    & & \hs{-10mm} \hs{2mm} = - \sum_{k \geq 1} \int_0^T \int_{D } \int_\mR g_k \hh \ph \hh d\nu_{t,x}(\xi) dx d\beta_k(t) \nonumber \\
    & & \hs{-10mm} \hs{5.3mm} - \frac{1}{2} \int_{Q } \int_\mR \pa_{\xi} \ph \hh G^2 \hh d\nu_{t,x}(\xi) dx dt + \int_{[0,T) \times D \times \mR} \pa_{\xi} \ph \hh dm \hs{4mm} {\rm a.s.}. \label{KIFO2}
   \end{eqnarray}It follows from (\ref{GKIFO}) and (\ref{KIFO2}) that for all $\ph \in C_c^\infty ([0,T) \times \mR^d \times (-N+\ep , N-\ep))$,
\begin{eqnarray*}
& & \int_{\Si } \int_\mR - a (\xi) \cdot {\bf n} \bar{f}_\pm \ph  =  M_N \int_{\Si} \int_\mR f^b_\pm \ph  - \int_{\Si} \int_\mR \pa_\xi \ph \bar{m}_N^\pm,
\end{eqnarray*}which implies that
\begin{gather*}
 \pa_\xi \bar{m}^\pm_N = - a (\xi) \cdot {\bf n} \bar{f}_\pm - M_N f^b_\pm \in L^1 (\Si \times (-N+ \ep ,N+\ep ) )
\end{gather*}in the sense of distribution on $\Si \times (-N+\ep , N - \ep)$. By Nikodym's theorem, for a.e. $(t,x) \in \Si$, $\bar{m}^\pm (t,x,\xi)$ is absolutely continuous in $\xi$ and
\begin{gather*}
\pa_\xi \bar{m}^\pm_N (t,x,\xi) = - a (\xi) \cdot {\bf n} (x) \bar{f}_\pm (t,x,\xi) - M_N f^b_\pm (t,x,\xi),
\end{gather*}for a.e. $\xi \in (-N+\ep , N-\ep)$. Since $\ep > 0 $ is arbitrary, we conclude the desired claim. \\
\end{proof}

\section{Uniqueness}

In this section we prove the main result of the paper.

\begin{theorem}[$L^1$-contraction property] \label{contraction} {\it Assume that $D$ is a bounded domain with Lipschitz boundary. Let $f_{i,+}$, $i=1,2$, be generalized kinetic solutions to }(\ref{SSCL1})-(\ref{SSCL3}){\it with data}
 $(f^0_{i,+},f^b_{i,+})=({\bf 1}_{u_{i,0}>\xi} , {\bf 1}_{u_{i,b}>\xi})${\it , respectively. Under the assumptions ${\rm (H_1)}$-${\rm (H_3)}$ we have for a.e. $t \in [0,T)$}
\begin{eqnarray}
& & \hs{-5mm} - \mE \int_D \int_\mR f_{1,+} (t,x,\xi) f_{2,-} (t,x,\xi) \leq -\mE \int_D \int_\mR f_{1,+}^0 (x,\xi) f_{2,-}^0 (x,\xi) \nonumber \\
& & \hs{30mm} - M_b \mE \int^t_0 \int_{\pa D} \int_\mR f_{1,+}^b (s,x,\xi) f_{2,-}^b(s,x,\xi), \label{contraction2}
\end{eqnarray}{\it where $M_b = \max \{ \lab a(\xi) \rab : \lab \xi \rab \leq \lno u_{1,b} \rno_{L^\infty (\Omega \times \Si )} \vee \lno u_{2,b} \rno_{L^\infty (\Omega \times \Si )} \}$.}
\end{theorem}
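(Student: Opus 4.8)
The plan is to adapt the doubling-of-variables method of \cite{a} to the present boundary setting, the genuinely new ingredient being Proposition \ref{ggghhh}, which encodes the boundary flux through the boundary data and the truncated boundary defect measures. Note that, formally, $-\int_\mR f_{1,+}(\xi)f_{2,-}(\xi)\,d\xi=(u_1-u_2)^+$, so the left-hand side of (\ref{contraction2}) is the natural one-sided $L^1$ functional and the aim is to bound its growth by the initial and boundary data alone. I would work with the extended formulation (\ref{VKIFO}) of Proposition \ref{ggghhh}(ii), whose test functions already range over $C_c^\infty(\mR^d\times\mR)$ and which carries the velocity cutoff $\Psi_\eta$ on $(-N,N)$ together with the true boundary flux term $-a(\xi)\cdot{\bf n}\,\bar f_\pm$. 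The first step is to double the space and velocity variables while keeping a single time variable (the It\^o structure forbids doubling time): use (\ref{VKIFO}) for $f_{1,+}$ in $(t,x,\xi)$ and for $f_{2,-}$ in $(t,y,\zeta)$, pair them through the kernel $\rho_\ep(x-y)\psi_\de(\xi-\zeta)$ with a standard mollifier $\rho_\ep$ on $\mR^d$, and apply It\^o's formula to the product of the two resulting real semimartingales (Lemma \ref{abcdefg} makes the fixed-time evaluation meaningful). The two transport terms then combine into $(a(\xi)-a(\zeta))\cdot(\nabla\rho_\ep)(x-y)\psi_\de$; since $a=A'$ is locally Lipschitz by ${\rm (H_1)}$ and $\psi_\de$ confines $|\xi-\zeta|\lesssim\de$, this is $O(\de/\ep)$ and vanishes on sending $\de\to0$ before $\ep\to0$, while the two kinetic-measure terms against $dm_1$ and $dm_2$ reduce, after an integration by parts in velocity, to integrals of $\psi_\de$ against the nonnegative products $m_i\otimes\nu^j$ and hence carry a definite sign that lets them be discarded.

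The second step is the noise cancellation. I would combine the two It\^o corrections $-\tfrac12\int G^2(x,\xi)\pa_\xi(\cdots)\,d\nu^1$ and $-\tfrac12\int G^2(y,\zeta)\pa_\zeta(\cdots)\,d\nu^2$ with the quadratic-covariation bracket $\sum_k\int g_k(x,\xi)g_k(y,\zeta)(\cdots)\,d\nu^1 d\nu^2$ of the two martingale parts (the same $\beta_k$ drive both solutions). After integrating by parts in $\xi,\zeta$ these assemble into integrals dominated by $\sum_k|g_k(x,\xi)-g_k(y,\zeta)|^2$ tested against $\psi_\de$, so by (\ref{H22}) they are controlled by $L\int(|x-y|^2+|\xi-\zeta|\,r(|\xi-\zeta|))\psi_\de\rho_\ep$; the first piece is $O(\ep^2)$ and the second $O(\de\,r(\de))$, so the whole noise contribution tends to $0$. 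This is precisely the mechanism that makes the stochastic $L^1$ estimate close. The additional velocity-endpoint terms carrying $\psi_\eta(N-\xi-\eta)-\psi_\eta(\xi+N-\eta)$ localize at $\xi=\pm N$ and, because $\bar m^+_N(N-0)=\bar m^-_N(-N+0)=0$, drop out as $\eta\to0$.

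The third step is the boundary. The surviving boundary term is $\int_0^t\int_{\pa D}\int(-a(\xi)\cdot{\bf n})\bar f_{1,+}$ paired against the trace of $f_{2,-}$ (and symmetrically). Using Proposition \ref{ggghhh}(iii) to replace $-a(\xi)\cdot{\bf n}\,\bar f_\pm$ by $M_N f^b_\pm+\pa_\xi\bar m^\pm_N$, the defect-measure contributions $\pa_\xi\bar m^\pm_N$ acquire a favourable sign from the monotonicity of $\bar f_\pm$ in $\xi$ (Proposition \ref{ggghhh}(i)) together with the nonnegativity of $\bar m^\pm_N$, and are discarded; the data contributions combine so that the effective normal flux acts only on the support of $f^b_{1,+}f^b_{2,-}$, which lies in $\{|\xi|\le\|u_{1,b}\|_{L^\infty}\vee\|u_{2,b}\|_{L^\infty}\}$, where $|a(\xi)\cdot{\bf n}|\le M_b$. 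Since $f^b_{1,+}f^b_{2,-}\le0$, this produces exactly the term $-M_b\,\mE\int_0^t\int_{\pa D}\int f^b_{1,+}f^b_{2,-}$ once $N$ is chosen larger than $\|u_{1,b}\|_{L^\infty}\vee\|u_{2,b}\|_{L^\infty}$ and the cutoffs are removed.

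I expect this last step to be the principal obstacle. In contrast with the periodic problem of \cite{a}, the defect measures live only on $(-N,N)$, so every manipulation must be carried inside a finite velocity window with the cutoffs $\Psi_\eta$, and one must verify that, after invoking Proposition \ref{ggghhh}(iii), the two solutions' boundary contributions combine with a sign compatible with the inequality rather than against it, and that the passage from the constant $M_N$ to $M_b$ through the support of the boundary data is legitimate uniformly in $\ep,\de,\eta$. Keeping track of the admissible order of limits (here $\de\to0$, then $\ep\to0$, then $\eta\to0$, and only finally fixing $N$ large) is the delicate bookkeeping on which the whole argument rests.
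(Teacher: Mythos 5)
Your overall architecture coincides with the paper's: doubling in $(x,\xi)$ with a single time variable, It\^o's product formula, cancellation of the noise via (\ref{H22}), and the boundary term handled through Proposition \ref{ggghhh}(i) and (iii) exactly as in the paper's computation of $-\int_{-N}^{N}(-a\cdot{\bf n})\bar f_{1,+}\bar f_{2,-}\,d\xi$. However, there are two genuine gaps. First, your disposal of the velocity-endpoint terms is wrong. The terms carrying $\psi_\eta(N-\xi-\eta)-\psi_\eta(\xi+N-\eta)$ generated by $\pa_\xi\Psi_\eta$ are paired in (\ref{VKIFO}) with the \emph{interior} measures $G^2\,d\nu^i$ and $dm^i$, not with the boundary measures, so the hypothesis $\bar m^+_N(N-0)=\bar m^-_N(-N+0)=0$ is irrelevant to them and they do not ``drop out as $\eta\to0$''. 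As $\eta\to0$ they converge to quantities of the form $\mu_{m^i}'(\pm N)$ and $(1+N^2)\mu_{\nu^i}'(\pm N)$, where $\mu_m,\mu_\nu$ are the decreasing rearrangements (\ref{mum})--(\ref{munu}); one must restrict $N$ to the full set $\mD$ of differentiability points and invoke Lemma \ref{limit} to show these derivatives vanish along $N\to\infty$. This is precisely the price of the truncated (renormalized) formulation and is the paper's key technical device (the term $I_N$ in Proposition \ref{Doubling}); without it the argument does not close for any fixed $N$.

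Second, your order of limits is inconsistent with the correct size of the noise term. Because the Young measures $\nu^i_{s,x}$ need not be absolutely continuous (they are Dirac masses for genuine kinetic solutions), the bound $\int\psi_\de(\xi-\zeta)\,d\nu^1_{s,x}(\xi)\,d\nu^2_{s,y}(\zeta)\le C\de^{-1}$ is all one has, so the $|x-y|^2$ contribution of (\ref{H22}) is $O(\ep^2\de^{-1})$, not $O(\ep^2)$, and the full estimate is $|I_3|\le C(\ep^2\de^{-1}+r(\de))$. Combined with $|I_1|\le C\de\ep^{-1}$ this forces $\ep^2\ll\de\ll\ep$; sending $\de\to0$ before $\ep\to0$, as you propose, makes $I_3$ blow up. The paper resolves this with the coupled choice $\de=\ep_n^{4/3}$, then $\ep_n\to0$, then $N\to\infty$ along $\mD$. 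Finally, a smaller point: the spatial kernel cannot be a standard symmetric mollifier on $\mR^d$. One needs the decentered kernels $\lam(x)\rho^\lam_\ep(y-x)$ of the boundary charts so that $\rho^\lam_\ep(y-x)=0$ on $D^\lam_x\times\pa D_y$ (killing the second solution's boundary term) and so that $f_{2,-}*\rho^\lam_{\ep_n}$ converges weak* to the trace $\bar f^{(\lam)}_{2,-}$ on $\Si^\lam$, which is what identifies the surviving cross boundary term with $\bar f_{1,+}\bar f_{2,-}$ before the estimate you describe in your third step can be applied.
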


\begin{corollary}[Uniqueness, Reduction] \label{uniqueness} {\it Under the same assumptions as in the above theorem, if $f_+$ is a generalized solution to }(\ref{SSCL1})-(\ref{SSCL3}) {\it with initial datum ${\bf 1}_{u_0 > \xi}$ and boundary datum ${\bf 1}_{u_b > \xi}$, then there exists a kinetic solution $u$ to }(\ref{SSCL1})-(\ref{SSCL3}) {\it with initial datum $u_0$ and boundary datum $u_b$ such that $f_+ (t,x,\xi) = {\bf 1}_{u(t,x) > \xi}$ a.s. for a.e. $(t,x,\xi)$. Moreover, for a.e. $t \in [0,T)$,}
\begin{eqnarray}
& & \hs{-14mm} \mE \lno u_1 (t) - u_2 (t) \rno_{L^1(D)} \leq \mE \lno u_{1,0} - u_{2,0} \rno_{L^1 (D)} \nonumber \\
& & \hs{35mm} + M_b \mE \lno u_{1,b} - u_{2,b} \rno_{L^1(\Si)},
\end{eqnarray}{\it where $u_i$, $i=1,2$, are the corresponding kinetic solutions to }(\ref{SSCL1})-(\ref{SSCL3}){\it with data $(u_{i,0} , u_{i,b})$.}
\end{corollary}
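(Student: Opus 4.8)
The plan is to establish (\ref{contraction2}) by the stochastic doubling-of-variables method of Debussche and Vovelle \cite{a}, adapted to accommodate the boundary. First I would test the generalized kinetic formulation (\ref{GKIFO}) for $f_{1,+}$ against a tensorized, mollified test function and, simultaneously, the formulation for $f_{2,-}$ against its partner, using the approximations to the identity $\rho^\lam_\ep(x-y)$ in the space variable and $\psi_\de(\xi-\zeta)$ in the velocity variable already introduced in Section 2. The processes $t \mapsto \int_{D\times\mR} f_{i}(t)\ph\,d\xi dx$ are semimartingales, so I would apply It\^o's product formula to the pairing of $\int_{D\times\mR} f_{1,+}(t,x,\xi)\ph_1$ with $\int_{D\times\mR} f_{2,-}(t,y,\zeta)\ph_2$, using the left/right time-limits from Lemma \ref{abcdefg} to make sense of the quantity at the fixed time $t$, and then take expectations so that the stochastic integrals (the $g_k$ terms) drop as martingales and only their joint quadratic variation survives.

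The interior analysis then proceeds along three lines. The transport terms $a(\xi)\cdot\nabla$ act on the mollifier $\rho_\ep^\lam$ and, being antisymmetric under $x\leftrightarrow y$, are controlled and vanish once the regularization is removed. The crucial step is the stochastic correction: the It\^o cross-variation contributes $\sum_k g_{k,1}g_{k,2}$, which combines with the two individual correction terms $-\tfrac12 G_1^2$ and $-\tfrac12 G_2^2$ appearing in (\ref{GKIFO}); on the diagonal $\xi=\zeta$ these assemble into $-\tfrac12\sum_k\lab g_{k,1}-g_{k,2}\rab^2$, which by the continuity hypothesis (\ref{H22}) is dominated by a term involving $\lab\xi-\zeta\rab\,r(\lab\xi-\zeta\rab)$ that tends to $0$ as $\de\to0$. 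The two kinetic measures $m_1,m_2$ enter through $\pa_\xi\pa_\zeta$ of the test function, and a sign argument shows their contribution is nonpositive, reflecting the dissipation that drives the contraction.

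The genuinely new difficulty is the boundary, and this is where I expect the main obstacle to lie. Because there is no pathwise $L^\infty$ bound, the boundary defect measures must be handled through their truncated versions $\bar m^\pm_N$ on $(-N,N)$ together with the cutoff functions $\Psi_\eta$. Here I would invoke Proposition \ref{ggghhh}, in particular identity (iii), $\pa_\xi\bar m^\pm_N=-a(\xi)\cdot{\bf n}\,\bar f_\pm-M_N f^b_\pm$, to rewrite the boundary contributions of both solutions in terms of the weak traces $\bar f_\pm$, the boundary data $f^b_\pm$, and $\bar m^\pm_N$. Localizing near $\pa D$ with the partition of unity $\{\lam_i\}$, I would combine the two boundary integrals and, exploiting that the boundary data are the indicators ${\bf 1}_{u_{i,b}>\xi}$ (so that the relevant velocities lie in $\{\lab\xi\rab\le\lno u_{1,b}\rno_{L^\infty(\Omega\times\Si)}\vee\lno u_{2,b}\rno_{L^\infty(\Omega\times\Si)}\}$), replace the $N$-dependent constant $M_N$ by $M_b$ and extract precisely the term $-M_b\,\mE\int_0^t\int_{\pa D}\int_\mR f^b_{1,+}f^b_{2,-}$. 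Controlling the sign of the truncated boundary measures and showing the extra $\Psi_\eta$-derivative terms disappear as $\eta\to0$ is the crux, and it is exactly the point where the nonhomogeneous Dirichlet setting departs from the periodic case of \cite{a}.

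Finally I would pass to the limit in the regularization parameters in the order $\ep\to0$, then $\de\to0$, then $\eta\to0$, and let $N\to\infty$, using the moment bound (\ref{esti26}) to justify these passages and the boundedness of $u_{i,b}$ to stabilize the truncated boundary terms. Collecting the surviving pieces — the pairing at time $t$ on the left, the initial data, the nonpositive interior measure contribution, and the boundary term carrying $M_b$ — yields (\ref{contraction2}).
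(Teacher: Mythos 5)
Your proposal is a plan for proving the contraction inequality (\ref{contraction2}), i.e.\ for Theorem \ref{contraction} --- and as such its outline (doubling of variables, It\^o product formula, the $\sum_k|g_{k,1}-g_{k,2}|^2$ cancellation via (\ref{H22}), the identification of the boundary traces through Proposition \ref{ggghhh}(iii), the truncations $\Psi_\eta$ and $\bar m^\pm_N$) does track the paper's proof of Proposition \ref{Doubling} and Theorem \ref{contraction} quite faithfully. But the statement you were asked to prove is the Corollary, which in the paper is \emph{deduced from} that theorem, and none of the content specific to the Corollary appears in your writeup. Two things are missing. First, the reduction step: you must apply (\ref{contraction2}) with $f_{1,+}=f_{2,+}=f_+$ (the generalized solution compared with itself). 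Because the initial and boundary data are indicator functions, $f^0_+f^0_-=0$ and $f^b_+f^b_-=0$, so the right-hand side vanishes and, using Lemma \ref{abcdefg} to make sense of the time-$t$ values, one gets
\begin{equation*}
\mE\int_D\int_\mR f_+^\pm(t,x,\xi)\bigl(1-f_+^\pm(t,x,\xi)\bigr)\,d\xi\,dx=0 .
\end{equation*}
Hence $f_+^\pm\in\{0,1\}$ a.e., and since $f_+^\pm(t,x,\xi)=\nu^\pm_{t,x}(\xi,\infty)$ is non-increasing in $\xi$ for a Young measure $\nu^\pm$, it must equal ${\bf 1}_{u^\pm>\xi}$ with $u^\pm=\int_\mR(f_+^\pm-{\bf 1}_{\xi<0})\,d\xi$; this formula also gives predictability of $u^\pm$, and (\ref{esti23}) follows from (\ref{esti26}). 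Second, the passage from the kinetic contraction to the stated $L^1$ estimate: once $f_{i,+}={\bf 1}_{u_i>\xi}$, one uses $-f_{1,+}f_{2,-}={\bf 1}_{u_1>\xi}{\bf 1}_{u_2\le\xi}$ and $\int_\mR{\bf 1}_{u_1>\xi}{\bf 1}_{u_2\le\xi}\,d\xi=(u_1-u_2)^+$, together with the symmetric inequality obtained by exchanging the roles of $u_1$ and $u_2$, to convert (\ref{contraction2}) into the $L^1(D)$ bound with the boundary term carrying $M_b$. Without these two steps the Corollary is simply not addressed; re-deriving (\ref{contraction2}) does not by itself produce the kinetic solution $u$ nor the $L^1$ contraction between $u_1$ and $u_2$.
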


To prove the uniqueness theorem we define the non-increasing functions $\mu_m (\xi)$ and $\mu_\nu (\xi)$ on $\mR$ by
\begin{gather}
 \mu_m (\xi) = \mE m ([0,T) \times D \times (\xi, \infty)) \hs{4mm} {\rm and} \label{mum} \\
 \mu_\nu (\xi) = \mE \int_{Q \times (\xi , \infty)} d \nu_{t,x} (\xi) dx dt, \label{munu}
\end{gather}where $m$ and $\nu$ are a kinetic measure and a Young measure satisfying (\ref{Young}), respectively. Let $\mD$ be the set of $\xi \in (0,\infty) $ such that both of $\mu_m$ and $\mu_\nu$ are differentiable at $-\xi$ and $\xi$. It is easy to see that $\mD$ is a full set in $(0,\infty)$.

\begin{lemma} \label{limit}
\begin{enumerate}
 \renewcommand{\labelenumi}{(\roman{enumi})}
  \item $\displaystyle \limsup_{\xi \to \infty, \hh \xi \in \mD} \mu_m ' (\pm \xi) = 0 $ \hs{2mm} {\it and \hs{2mm} $\displaystyle \limsup_{\xi \to \infty , \hh \xi \in \mD} \xi^p \mu_\nu ' (\pm \xi) = 0$ for $p \geq 1$. }
  \item {\it If $N \in \mD $, then as $\de \to + 0$}
 \begin{gather*}
  \int_\mR \psi_\de (N \pm \zeta) d \mu_m (\zeta ) \to \mu_m ' (\mp N)
 \end{gather*}{\it and}
 \begin{gather*}
  \int_\mR \psi_\de (N \pm \zeta) (1 + \lab \zeta \rab^2 ) d \mu_\nu (\zeta ) \to (1 + N^2) \mu_\nu ' (\mp N).
 \end{gather*}
 \end{enumerate}
\end{lemma}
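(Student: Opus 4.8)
The plan is to exploit that both $\mu_m$ and $\mu_\nu$ are bounded non-increasing functions, hence of bounded variation, and to read their behaviour off the finite Lebesgue--Stieltjes measures $-d\mu_m$ and $-d\mu_\nu$. First I would record the identifications $-d\mu_m=\bar m$ and $-d\mu_\nu=\bar\nu$, where $\bar m(B)=\mE\,m([0,T)\times D\times B)$ and $\bar\nu(B)=\mE\int_Q\nu_{t,x}(B)\,dx\,dt$ for Borel $B\subseteq\mR$; indeed $\mu_m(\xi)=\bar m((\xi,\infty))$ and $\mu_\nu(\xi)=\bar\nu((\xi,\infty))$ by (\ref{mum})--(\ref{munu}). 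Both are finite: $\bar m(\mR)=\mE\,m([0,T)\times D\times\mR)<\infty$ since $m\in\mathcal M_b^+$, and integrating (\ref{esti26}) over $t\in[0,T]$ gives $\int_\mR|\zeta|^p\,d\bar\nu(\zeta)\le TC_p<\infty$ for every $p\ge1$. Because the absolutely continuous parts $(-\mu_m')\,d\xi$ and $(-\mu_\nu')\,d\xi$ are dominated by $\bar m$ and $\bar\nu$, I obtain the integrability bounds $\int_\mR(-\mu_m'(\xi))\,d\xi<\infty$ and $\int_\mR|\xi|^p(-\mu_\nu'(\xi))\,d\xi<\infty$ for all $p\ge1$.

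For part (i) I would argue by integrability alone. A non-negative function integrable on $(1,\infty)$ has $\liminf$ equal to $0$ at $+\infty$; otherwise it is bounded below by some $c>0$ for all large arguments and its integral diverges. Applying this to $-\mu_m'$ and to $\xi\mapsto\xi^p(-\mu_\nu'(\xi))$, and using $\mu_m'\le0$, $\mu_\nu'\le0$, yields $\limsup_{\xi\to\infty}\mu_m'(\xi)=0$ and $\limsup_{\xi\to\infty}\xi^p\mu_\nu'(\xi)=0$; the same reasoning on $(-\infty,-1)$ settles the $-\xi$ statements. Since for each $\ep>0$ the level set $\{\xi>1:-\mu_m'(\xi)\ge\ep\}$ has finite measure while $\mD$ is co-null in $(0,\infty)$, one can pick the approximating points inside $\mD$, so restricting the $\limsup$ to $\xi\in\mD$ changes nothing.

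For part (ii) I would use integration by parts for Stieltjes integrals together with the differentiability of $\mu_m,\mu_\nu$ at $\pm N$ furnished by $N\in\mD$. Treating the representative case $\int_\mR\psi_\de(N-\zeta)\,d\mu_m(\zeta)$ (the others being symmetric under $\zeta\mapsto-\zeta$), the evenness of $\psi$ gives $\psi_\de(N-\zeta)=\psi_\de(\zeta-N)$, and for the $C_c^1$ function $\zeta\mapsto\psi_\de(\zeta-N)$ I write
\[
 \int_\mR\psi_\de(N-\zeta)\,d\mu_m(\zeta)=-\int_\mR\mu_m(\zeta)\,\psi_\de'(\zeta-N)\,d\zeta=-\frac1\de\int_\mR\bigl[\mu_m(N+\de s)-\mu_m(N)\bigr]\psi'(s)\,ds,
\]
after the substitution $\zeta=N+\de s$ and using $\int\psi'=0$. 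Differentiability at $N$ gives $\mu_m(N+\de s)-\mu_m(N)=\mu_m'(N)\de s+o(\de)$ uniformly for $s$ in $\mathrm{supp}\,\psi'$, and the identities $\int\psi=1$, $\int s\psi'(s)\,ds=-1$ produce the limit $\mu_m'(N)$. For the weighted statement I apply the same scheme to $\Phi_\de(\zeta)=\psi_\de(\zeta-N)(1+|\zeta|^2)$: integration by parts splits $\Phi_\de'$ into the $\psi_\de'$-term, which by the Taylor expansion of the product $\mu_\nu(\cdot)(1+|\cdot|^2)$ tends to $(1+N^2)\mu_\nu'(N)+2N\mu_\nu(N)$, and the $\psi_\de$-term carrying the weight derivative $2\zeta$, which by the standard mollifier limit tends to $-2N\mu_\nu(N)$; the two cancel to leave $(1+N^2)\mu_\nu'(N)$.

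The main obstacle is conceptual rather than computational: $\mu_m$ and $\mu_\nu$ may carry singular parts, so the mollified Stieltjes integrals in (ii) are not mollifications of the a.e.\ derivatives, and a naive Lebesgue-point argument would be insufficient. The point of routing everything through integration by parts is exactly that the formula $\int\Phi_\de\,dG=-\int G\,\Phi_\de'\,d\zeta$ is insensitive to the Lebesgue decomposition of $dG$, so that mere differentiability of the primitive at the single point $\pm N$---all that membership in $\mD$ supplies---suffices to annihilate the singular contribution in the limit. The only care needed is the uniformity of the Taylor remainder over the compact support of $\psi'$, which is immediate, together with the elementary weight identities $\int\psi=1$ and $\int s\psi'(s)\,ds=-1$.
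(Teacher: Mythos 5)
Your proof is correct. Part (ii) coincides with the paper's own argument: both of you integrate by parts so that the Lebesgue--Stieltjes measure $d\mu$ falls on the compactly supported test function, and then use only the differentiability of the primitive at the single point $\mp N$, together with $\int\psi=1$, $\int s\,\psi'(s)\,ds=-1$ and the uniform control of the Taylor remainder over the support of $\psi$; the paper keeps the product $\psi_\de(\zeta)\la 1+(\mp N+\zeta)^2\ra$ inside one Stieltjes differential where you split off the $2\zeta\,\psi_\de$ term and note the cancellation of the $2N\mu_\nu(\mp N)$ contributions, but this is the same computation, and your remark that integration by parts is what neutralizes a possible singular part of $d\mu$ is exactly the point of the paper's manipulation. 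Part (i) is where you genuinely diverge. The paper argues by contradiction: assuming $\limsup_{\xi\in\mD}\xi^p\mu_\nu'(\pm\xi)=\alpha<0$, it integrates $(\zeta^p\mu_\nu(\zeta))'$ from $\xi_0$ to $\xi$ using the Chebyshev-type bound $\lab\xi\rab^p\mu_\nu(\xi)\le C_p$ and the one-sided fundamental theorem of calculus for monotone functions, and contradicts $\mu_\nu\ge 0$ (resp.\ the same bound at $-\xi$). You instead note that the absolutely continuous part of the finite measure $-d\mu_\nu$ has density $-\mu_\nu'$, so that $\int_\mR\lab\xi\rab^p(-\mu_\nu'(\xi))\,d\xi<\infty$ by the moment bound (\ref{esti26}), and conclude $\liminf\xi^p(-\mu_\nu'(\pm\xi))=0$ from integrability alone; your observation that the level sets $\{\xi^p(-\mu_\nu')\ge\varepsilon\}$ have finite measure while $\mD$ is co-null correctly disposes of the restriction $\xi\in\mD$, which the paper handles by choosing $\xi_0\in\mD$ inside the contradiction. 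The two routes use the same inputs ((\ref{vanish}), (\ref{Young})/(\ref{esti26}), and the fact that the singular part of a monotone function only reinforces the relevant inequality); yours is shorter and makes transparent that only the $p$-th moment of the measure is needed, while the paper's stays at the level of elementary calculus for monotone functions and avoids invoking the Lebesgue decomposition. Either argument is acceptable.
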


\begin{proof} We prove the lemma only in the case of $\mu_\nu$. The case of $\mu_m$ will be done in a similar fashion. Due to (\ref{Young}) there exists $C_p > 0$ such that $\lab \xi \rab^p \mu_\nu (\xi) \leq C_p$ for every $\xi \in \mR$. Let us assume that $\displaystyle \limsup_{\xi \to \infty , \hh \xi \in \mD} \xi^p \mu_\nu ' (\pm \xi) = \alpha < 0 $. Then we can take $\xi_0 \in \mD$ so that $\xi^p \mu_\nu ' (\pm \xi ) < \alpha / 2 $ whenever $\xi > \xi_0$ and $\xi \in \mD$. Since the function $\xi \mapsto \xi^p \mu_\nu (\xi)$ is non-increasing on $(\xi_0 , \infty)$ if $\xi_0$ is sufficiently large, we have
\begin{eqnarray*}
 \xi^p \mu_\nu (\xi) - \xi_0^p \mu_\nu (\xi_0) & \leq & \int_{\xi_0}^{\xi} (\zeta^p \mu_\nu (\zeta)) ' d \zeta \\
 & \leq & \int_{\xi_0}^{\xi} \la \frac{C_p}{\zeta} + \frac{\alpha}{2} \ra d \zeta.
\end{eqnarray*}Hence $\displaystyle \limsup_{\xi \to \infty} \xi^p \mu_\nu (\xi) = - \infty$ and this contradicts the fact that $\mu_\nu (\xi ) \geq 0$. On the other hand the function $\xi \mapsto \xi^p \mu_\nu (- \xi)$ is non-decreasing on $(\xi_0 , \infty)$. Hence
\begin{eqnarray*}
\xi^p \mu_\nu (-\xi) - \xi_0^p \mu_\nu (-\xi_0) & \geq & \int_{\xi_0}^{\xi} \la \zeta^p \mu_\nu (- \zeta) \ra ' d \zeta \\
& \geq & \int_{\xi_0}^{\xi} \la - \frac{\alpha}{2} \ra d \zeta.
\end{eqnarray*}Therefore, $\displaystyle \liminf_{\xi \to \infty} \xi^p \mu_\nu (-\xi) = \infty$ and this contradicts the fact that $\lab \xi \rab^p \mu_\nu (\xi) \leq C_p$. Consequently we have $\displaystyle \limsup_{\xi \to \infty, \hh \xi \in \mD} \xi^p \mu_\nu ' (\pm \xi) = 0$. \\
\indent Next, let $N \in \mD$. Since $\mu_\nu (\pm (N - \zeta)) = \mu_\nu (\pm N) + \mu_\nu ' (\pm N) \zeta + o (\zeta)$, it follows that
\begin{eqnarray*}
& & \int_\mR \psi_\de (N \pm \zeta ) (1 + \lab \zeta \rab^2) d \mu_\nu (\zeta) \\
& & \hs{-2mm} = - \int_{-\de}^\de \mu_\nu (\mp N + \zeta) d (\psi_\de (\zeta) (1 + ( \mp N + \zeta)^2 ) ) \\
& & \hs{-2mm} =  \mu_{\nu}' (\mp N) \left\{ (1+ N^2) + \int_{-\de}^\de \zeta^2 \psi_\de (\zeta ) d\zeta \right\} + \int_{-\de }^\de o(\zeta) d (\psi_\de (\zeta )( 1 + (\mp N + \zeta)^2 )).
\end{eqnarray*}Besides, the last term of the right hand on the above equality tends to $0$ as $\de \to 0$. To see this take an arbitrary $\ep > 0$. There exists $\de_0 > 0$ such that $\lab o(\zeta) \rab \leq \ep \lab \zeta \rab$. If $0 < \de < \de_0$, then
\begin{eqnarray*}
& & \lab \int_{-\de}^\de o(\zeta) d (\psi_\de (\zeta)(1+ (\mp N + \zeta)^2)) \rab \\
& & \leq \ep \int_{-\de }^\de \lab \zeta \psi_\de ' (\zeta)(1+ (\mp N + \zeta)^2) + 2 (\mp N + \zeta) \psi_\de (\zeta) \rab d \zeta \\
& & \leq \ep \left\{ \de (1 + (\pm N \pm \de )^2 ) + 2 \lab \mp N \mp \de \rab \right\}.
\end{eqnarray*}Thus we obtain the claim of (ii) for $\mu_\nu$.
\end{proof}

\begin{proposition}[Doubling variable] \label{Doubling} {\it Let $f_{i,+}$, $i=1,2$, be generalized kinetic solutions to }(\ref{SSCL1})-(\ref{SSCL3}){\it with data $(f_{i,+}^0 , f_{i,+}^b )$. Then, for $t \in [0,T)$, for $\ep , \de >0$, for $N \in \mD$ and for any element $\lam$ of the partition of unity $\{ \lam_i \}$ on $\ov{D}$, we have}
\begin{eqnarray}
& & \hs{-3mm} - \mE \int_{D^2 \times (-N,N)^2} \lam (x) \rho_\ep^\lam (y-x) \psi_\de (\xi - \zeta) f_{1,+}^+ (t,x,\xi) f_{2,-}^+ (t,y,\zeta) d\xi d\zeta dxdy \nn \\
& & \hs{-3mm} \leq - \mE \int_{D^2 \times (-N,N)^2} \lam (x) \rho_\ep^\lam (y-x) \psi_\de (\xi - \zeta) f_{1,+}^0 (x,\xi) f_{2,-}^0 (y,\zeta) d\xi d\zeta dxdy \nn \\
& & \hs{-1mm} - \mE \int_{(0,t) \times \pa D \times D \times (-N,N)^2} \lam (x) \rho_\ep^\lam (y-x) \psi_\de (\xi - \zeta) (-a(\xi) \cdot {\bf n} (x)) \nn \\
& & \hs{49mm} \times \bar{f}_{1,+}^{(\lam)} (s,x,\xi) f_{2,-} (s,y,\zeta) d\xi d\zeta d\si (x) dy ds \nn \\
& & + I_1 + I_2 + I_3 + I_N, \label{doubling}
\end{eqnarray}where
\begin{eqnarray*}
& & \hs{-2mm}  I_1 = - \mE \int_{(0,t) \times D^2 \times (-N,N)^2}  f_{1,+} (s,x,\xi) f_{2,-} (s,y,\zeta) (a(\xi) - a(\zeta)) \\
& & \hs{48mm} \cdot  \nabla_x  \rho_\ep^\lam (y-x)  \lam (x) \psi_\de (\xi - \zeta) d\xi d\zeta dx dy ds , \\
& & \hs{-2mm} I_2 = - \mE \int_{(0,t) \times D^2 \times (-N,N)^2} f_{1,+} (s,x,\xi) f_{2,-} (s,y,\zeta) a (\xi) \\
& & \hs{48mm} \cdot \nabla_x \lam (x) \rho^\lam_\ep (y-x) \psi_\de (\xi - \zeta) d\xi d\zeta dx dy ds, \\
& & \hs{-2mm} I_3 = \frac{1}{2} \mE \int_{(0,t) \times D^2 \times (-N,N)^2} \lam (x) \rho_\ep^\lam (y-x) \psi_\de (\xi - \zeta) \\
& & \hs{28mm} \times \sum_{k=1}^\infty \lab g_k (x,\xi) - g_k (y , \zeta) \rab^2 d \nu_{s,x}^1 (\xi) \otimes d\nu_{s,y}^2 (\zeta) dxdyds, \\
& & \hs{-2mm} \limsup_{N \to \infty} I_N = 0 \hs{4mm} {\rm with} \ I_N \ {\rm defined \ by \ (\ref{IN}) \ below}.
\end{eqnarray*}Here $m^i$ and $\nu^i$, $i=1,2$, are the associated kinetic measures and the associated Young measures with the generalized kinetic solutions $f_{i,+}$, $\bar{f}_{i,\pm}^{(\lam)}$ are any weak* limits of $\{ f^{\lam ,\ep'}_{i,\pm} \}$ as $\ep '  \to 0$ in $L^\infty (\Si^\lam \times \mR) $, and $C$ is a constant which is independent of $\ep$, $\de$, $N$.
\end{proposition}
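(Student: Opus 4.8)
The plan is to carry out a doubling-of-variables argument: run the kinetic identity for $f_{1,+}$ in the variables $(s,x,\xi)$ with $(y,\zeta)$ frozen, run it for $f_{2,-}$ in $(s,y,\zeta)$ with $(x,\xi)$ frozen, and fuse the two resulting It\^o processes by the product formula. The natural test object is $\Upsilon(x,y,\xi,\zeta)=\lam(x)\rho_\ep^\lam(y-x)\psi_\de(\xi-\zeta)$, which for each frozen slice belongs to the admissible class $C_c^\infty(\mR^d\times\mR)$. I would feed it, together with a smooth approximation of the time indicator ${\bf 1}_{[0,t)}$ (as in the proof of Proposition \ref{ggghhh}(ii)), into the localized identity (\ref{VKF}): for $f_{1,+}$ this produces an It\^o process carrying the interior transport term, the boundary integral over $\pa D^\lam$ with factor $(-a(\xi)\cdot{\bf n})\bar f_{1,+}^{(\lam)}$, the martingale, the noise correction $-\tfrac12\int G^2\pa_\xi\Upsilon\,d\nu^1$, the kinetic-measure term $\int\Psi_\eta\pa_\xi\Upsilon\,dm_1$, and the two edge terms generated by $\pa_\xi\Psi_\eta$. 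The same identity is applied to $f_{2,-}$ in $(y,\zeta)$; here the decisive point is that the shift by $L_\lam+1$ built into $\rho^\lam$ forces $\rho_\ep^\lam(y-x)=0$ whenever $y\in\pa D^\lam$ and $x\in D^\lam$, so that no boundary trace of $f_{2,-}$ survives --- this is exactly why only $\bar f_{1,+}^{(\lam)}$ appears on the right of (\ref{doubling}). Evaluating the processes at time $t$ uses the right limits $f_{i,+}^+(t)$ supplied by Lemma \ref{abcdefg}.

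Next I would apply It\^o's product formula, take expectations so that the martingale parts vanish, and sort the surviving terms. The transport contributions arise from $a(\xi)\cdot\nabla_x+a(\zeta)\cdot\nabla_y$ acting on $\Upsilon$: since $\nabla_y\rho_\ep^\lam(y-x)=-\nabla_x\rho_\ep^\lam(y-x)$, the derivatives of $\rho_\ep^\lam$ collapse to $\lam(a(\xi)-a(\zeta))\cdot\nabla_x\rho_\ep^\lam$, giving $I_1$, while the derivative hitting $\lam(x)$ (a function of $x$ alone) produces $a(\xi)\cdot\nabla_x\lam$, that is $I_2$. The three second-order It\^o terms --- the two self-corrections carrying $G^2(x,\xi)$ and $G^2(y,\zeta)$, and the cross covariation $\sum_k g_k(x,\xi)g_k(y,\zeta)$ coming from the common Wiener process --- combine, via the identity $\tfrac12 a^2+\tfrac12 b^2-ab=\tfrac12(a-b)^2$ and the relation $\pa_\zeta\psi_\de(\xi-\zeta)=-\pa_\xi\psi_\de(\xi-\zeta)$, into the single term $I_3$ with integrand $\sum_k|g_k(x,\xi)-g_k(y,\zeta)|^2$ integrated against $d\nu^1\otimes d\nu^2$. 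The initial data give the first term on the right and the boundary integral gives the second.

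What turns the resulting identity into the inequality (\ref{doubling}) is the sign of the kinetic-measure contributions. After an integration by parts that moves the $\xi$-derivative onto $f_{2,-}$ (resp.\ the $\zeta$-derivative onto $f_{1,+}$), using $\pa_\zeta f_{2,-}=-\nu^2$, $\pa_\xi f_{1,+}=-\nu^1$ and $\pa_\xi\psi_\de=-\pa_\zeta\psi_\de$, these two contributions reduce on the right-hand side to $-\mE\int\lam\rho_\ep^\lam\psi_\de\,d\nu^2\,dm_1$ and $-\mE\int\lam\rho_\ep^\lam\psi_\de\,d\nu^1\,dm_2$, both $\le0$; discarding them preserves the inequality. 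All remaining pieces created by $\pa_\xi\Psi_\eta$ and $\pa_\zeta\Psi_\eta$, which concentrate near $\xi,\zeta=\pm N$, are gathered into $I_N$. Letting $\eta\to+0$ with $N\in\mD$, so that $\mu_m$ and $\mu_\nu$ are differentiable at $\pm N$, Lemma \ref{limit}(ii) identifies their limits through $\mu_m'(\pm N)$ and $(1+N^2)\mu_\nu'(\pm N)$, and Lemma \ref{limit}(i) then forces $\limsup_{N\to\infty}I_N=0$.

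The step I expect to be the main obstacle is the rigorous justification of the infinite-dimensional It\^o product formula for the two kinetic densities, which cannot be applied formally. As in the corresponding argument of \cite{a}, one must first regularize $f_{1,+}$ and $f_{2,-}$ in all variables, apply the finite-dimensional It\^o formula to the resulting smooth functionals, and then pass to the limit while keeping control of the measure, noise, and boundary terms. Superimposed on this, and genuinely new relative to the periodic setting of \cite{a}, is the simultaneous bookkeeping of the boundary integral and the cutoff $\Psi_\eta$: one has to take the weak-$*$ limit $\ep'\to0$ to realize the trace $\bar f_{1,+}^{(\lam)}$ as in Proposition \ref{ggghhh}, verify the cancellation of the $f_{2,-}$ boundary term through the mollifier shift, and confirm that the cutoff-derivative terms assemble into an $I_N$ to which Lemma \ref{limit} applies.
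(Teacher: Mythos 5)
Your proposal follows essentially the same route as the paper: the authors likewise set up the two It\^o processes from the localized identity (\ref{VKF})/(\ref{VKIFO}), fuse them by the product formula for continuous martingales plus finite-variation parts, exploit the mollifier shift so that $\rho_\ep^\lam(y-x)$ vanishes on $D^\lam_x\times\pa D_y$ (leaving only the trace $\bar f_{1,+}^{(\lam)}$), split the transport term into $I_1$ and $I_2$ via $(\nabla_x+\nabla_y)\rho_\ep^\lam=0$, combine the three second-order terms into $I_3$ using $(\pa_\xi+\pa_\zeta)\psi_\de=0$, discard the nonpositive $d\nu^i\otimes dm^j$ cross terms to get the inequality, and absorb the $\pa_\xi\Psi_\eta$ edge terms into $I_N$ controlled by Lemma \ref{limit}. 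All the key mechanisms you identify, including the need to rigorously justify the product formula by approximation, match the paper's argument.
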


\begin{proof} We will follow the proof of \cite[Proposition 9]{a}. Let $\ph_1 \in C_c^\infty (\mR^d_x \times \mR_\xi)$ and $\ph_2 \in C_c^\infty (\mR^d_y \times \mR_\zeta)$. Set
\begin{gather*}
 F_{1,+} (t) = \sum_{k=1}^\infty \int_0^t \int_{D_x^\lam} \int_{-N}^N \Psi_\eta (\xi) g_{k,1} \ph^\lam_1 d\nu_{s,x}^1 (\xi) dx d\beta_k (s) 
\end{gather*}and
\begin{eqnarray*}
& & \hs{-6mm} G_{1,+} (t) =  \int^t_0 \int_{D_x^\lam} \int_{-N}^N \Psi_\eta (\xi) f_{1,+} (s,x,\xi) a (\xi) \cdot \nabla_x \ph^\lam_1 d\xi dx ds \\
& & +  \frac{1}{2}   \int^t_0 \int_{D_x^\lam} \int_{-N}^N \Psi_\eta (\xi) \pa_\xi \ph^\lam_1 G_1^2 d\nu_{s,x}^1(\xi) dx ds \\
& &  +   \int^t_0 \int_{\pa D_x^\lam} \int_{-N}^N \Psi_\eta (\xi) (-a(\xi) \cdot {\bf n}) \bar{f}^{(\lam)}_{1,+} (s,x,\xi) \ph^\lam_1 d\xi d\si (x) ds \\
& & -   \int_{[0,t] \times D^\lam_x \times  (-N,N)} \Psi_\eta (\xi) \pa_\xi \ph^\lam_1 dm^1 (s,x,\xi) \\
& &  -  \frac{1}{2}  \int^t_0 \int_{D_x^\lam} \int_{-N}^N ( \psi_\eta (- \eta +N - \xi) - \psi (-\eta + N + \xi)) \ph^\lam_1 G_1^2 d\nu_{s,x}^1 (\xi) dx ds \\
& &  +  \int_{[0,t] \times D^\lam_x \times (-N,N)} ( \psi_\eta (- \eta +N - \xi) - \psi (-\eta + N + \xi)) \ph^\lam_1 dm^1 (s,x,\xi).
\end{eqnarray*}On the other hand we set
\begin{eqnarray*}
& & \hs{-6mm} F_{2,-} (t) = \sum_{k=1}^\infty \int_0^t \int_{D_y} \int_{-N}^N \Psi_\eta (\zeta) g_{k,2} \ph_2 d\nu_{s,y}^2 (\zeta) dy d\beta_k (s) , \\
& & \hs{-6mm} G_{2,-} (t)  =  \int^t_0 \int_{D_y} \int_{-N}^N \Psi_\eta (\zeta) f_{2,-} (s,x,\zeta) a (\zeta) \cdot \nabla_y \ph_2 d\zeta dy ds \\
& &  +  \frac{1}{2}   \int^t_0 \int_{D_y} \int_{-N}^N \Psi_\eta (\zeta) \pa_\zeta \ph_2 G_2^2 d\nu_{s,y}^2(\zeta) dy ds \\
& &  +   \int^t_0 \int_{\pa D_y} \int_{-N}^N \Psi_\eta (\zeta) (-a(\zeta) \cdot {\bf n (y)}) \bar{f}_{2,-} (s,y,\zeta) \ph_2 d\zeta d\si (y) ds \\
& &  -   \int_{[0,t] \times D_y \times (-N,N)} \Psi_\eta (\zeta) \pa_\zeta \ph_2 dm^2 (s,y,\zeta) \\
& & -  \frac{1}{2}  \int^t_0 \int_{D_y} \int_{-N}^N ( \psi_\eta (- \eta +N - \zeta) - \psi (-\eta + N + \zeta)) \ph_2 G_2^2 d\nu_{s,y}^2 (\zeta) dy ds \\
& &  +  \int_{[0,t] \times D_y \times (-N,N)} ( \psi_\eta (- \eta +N - \zeta) - \psi (-\eta + N + \zeta)) \ph_2 dm^2 (s,y,\zeta).
\end{eqnarray*}
By (\ref{VKIFO}) and (\ref{VKF}) we have
\begin{eqnarray*}
& & \int_{D_x^\lam} \int_{-N}^N \Psi_\eta (\xi) f_{1,+}^+ (t) \ph_1^\lam = F_{1,+} (t) + G_{1,+} (t) + \int_{D_x^\lam} \int_{-N}^N \Psi_\eta (\xi) f_{1,+}^0 \ph_1^\lam
\end{eqnarray*}and
\begin{eqnarray*}
& & \int_{D_y} \int_{-N}^N \Psi_\eta (\zeta) f_{2,-}^+ (t) \ph_2 = F_{2,-} (t) + G_{2,-} (t) + \int_{D_y} \int_{-N}^N \Psi_\eta (\zeta) f_{2,-}^0 \ph_2 .
\end{eqnarray*}Set $\alpha (x,\xi,y,\zeta) = \ph_1 (x,\xi) \ph_2 (y,\zeta)$ and $\Psi_\eta (\xi,\zeta) = \Psi_\eta (\xi) \Psi_\eta (\zeta)$. Using It${\rm \hat{o}}$'s formula for $F_{1,+} (t) F_{2,-}(t)$, integration by parts for functions of finite variation (see \cite[p.6]{q}) for 
\begin{gather*}
 \bigg\{ G_{1,+}(t) + \int_{D_x^\lam} \int_{-N}^N \Psi_\eta (\xi) f_{1,+}^0 \ph_1^\lam  \bigg\} \bigg\{   G_{2,-}(t) + \int_{D_y} \int_{-N}^N \Psi_\eta (\zeta) f_{2,-}^0 \ph_2 \bigg\},
\end{gather*}and integration by parts for functions of finite variation and continuous martingales (see \cite[p.152]{q}) for
\begin{gather*}
F_{1,+} (t) \left\{ G_{2,-} (t) + \int_{D_y} \int_{-N}^N \Psi_\eta (\zeta) f_{2,-}^0 \ph_2 d\zeta dy \right\},
\end{gather*}we obtain
\begin{eqnarray}
& & \hs{-5mm} - \mE \int_{D_x^\lam} \int_{D_y} \int_{-N}^N \int_{-N}^N \Psi_\eta (\xi,\zeta) f_{1,+}^+ (t) f_{2,-}^+ (t) \alpha^\lam d\xi d\zeta dx dy \nonumber \\
& & \hs{-4mm} = - \mE \int_{D_x^\lam} \int_{D_y} \int_{-N}^N \int_{-N}^N \Psi_\eta (\xi,\zeta) f_{1,+}^0 f_{2,-}^0 \alpha^\lam  d\xi d\zeta dx dy \nonumber \\
& & \hs{-4mm} - \sum_{k=1}^\infty \mE \int_0^t \int_{D_x^\lam} \int_{D_y} \int_{-N}^N \int_{-N}^N \Psi_\eta (\xi,\zeta) g_{k,1} g_{k,2} \alpha^\lam d\nu_{s,x}^1 (\xi) \otimes d \nu_{s,y}^2 (\zeta) dx dy ds \nonumber \\
& & \hs{-4mm} - \mE \int_0^t \int_{D_x^\lam} \int_{D_y} \int_{-N}^N \int_{-N}^N \Psi_\eta (\xi,\zeta) f_{1,+} (s) f_{2,-} (s) (a(\xi ) \cdot \nabla_x + a (\zeta) \cdot \nabla_y ) \nonumber \\
& & \hs{80mm} \times \alpha^\lam d\xi d\zeta dx dy ds \nonumber \\
& & \hs{-4mm}  -\frac{1}{2}  \mE \int_0^t \int_{D_x^\lam} \int_{D_y} \int_{-N}^N \int_{-N}^N \Psi_\eta (\xi,\zeta) f_{1,+} (s) \pa_\zeta\alpha^\lam G_2^2 d\nu_{s,y}^2 (\zeta) d\xi dx dy ds \nonumber \\
& & \hs{-4mm}  - \mE \int_0^t \int_{D_x^\lam} \int_{\pa D_y} \int_{-N}^N \int_{-N}^N \Psi_\eta (\xi,\zeta) f_{1,+} (s) \bar{f}^{(\lam)}_{2,-} (s) (-a(\zeta) \cdot {\bf n}) \nonumber \\
& & \hs{72mm} \times \alpha^\lam d\xi d\zeta dx d\si (y) ds \nonumber \\
& & \hs{-4mm}  + \mE \int_{[0,t] \times D_y \times (-N,N)} \int_{D_x^\lam} \int_{-N}^N \Psi_\eta (\xi,\zeta) f_{1,+}^- (s) \pa_\zeta \alpha^\lam d\xi dx dm^2 (s,y,\zeta) \nonumber \\
& & \hs{-4mm}  + \frac{1}{2} \mE \int_0^t \int_{D_x^\lam} \int_{D_y} \int_{-N}^N \int_{-N}^N \Psi_\eta (\xi) f_{1,+} (s) \Big[\psi_\eta (-\eta + N - \zeta) \nonumber \\
& & \hs{42mm} - \psi_\eta (-\eta + N + \zeta)\Big] G_2^2 \alpha^\lam d \nu_{s,y}^2 (\zeta) d\xi dx dy ds \nonumber \\
& & \hs{-4mm}  - \mE \int_{[0,t] \times D_y \times (-N,N)} \int_{D_x^\lam} \int_{-N}^N \Psi_\eta (\xi) f_{1,+}^+ (s) \Big[\psi_\eta (-\eta + N - \zeta) \nonumber \\
& & \hs{48mm} - \psi_\eta (-\eta + N + \zeta)\Big] \alpha^\lam d\xi dx dm^2 (s,y,\zeta) \nonumber \\
& & \hs{-4mm}  - \frac{1}{2}  \mE \int_0^t \int_{D_x^\lam} \int_{D_y} \int_{-N}^N \int_{-N}^N \Psi_\eta (\xi,\zeta) f_{2,-} (s) \pa_\xi \alpha^\lam G_1^2 d\nu_{s,x}^1 (\xi) d\zeta dx dy ds \nonumber \\
& & \hs{-4mm}  - \mE \int_0^t \int_{\pa D_x^\lam} \int_{ D_y} \int_{-N}^N \int_{-N}^N \Psi_\eta (\xi,\zeta) \bar{f}^{(\lam)}_{1,+} (s) f_{2,-} (s) (-a(\xi) \cdot {\bf n}) \nonumber \\
& & \hs{-4mm}  \hs{85mm} \times \alpha^\lam d\xi d\zeta d\si (x) dy ds \nonumber \\
& & \hs{-4mm}  + \mE \int_{[0,t] \times D_x^\lam \times (-N,N) } \int_{D_y} \int_{-N}^N \Psi_\eta (\xi,\zeta) f_{2,-}^+ (s) \pa_\xi \alpha^\lam d\zeta dy dm^1 (s,x,\xi) \nonumber \\
& & \hs{-4mm}  + \frac{1}{2} \mE \int_0^t \int_{D_x^\lam} \int_{D_y} \int_{-N}^N \int_{-N}^N \Psi_\eta (\zeta) f_{2,-} (s) \Big[\psi_\eta (-\eta + N - \xi) \nonumber \\
& & \hs{-4mm}  \hs{49mm} - \psi_\eta (-\eta + N + \xi)\Big] G_1^2 \alpha^\lam d \nu_{s,x}^1 (\xi) d\zeta dx dy ds \nonumber \\
& & \hs{-4mm}  - \mE \int_{[0,t] \times D_x^\lam \times (-N,N) } \int_{D_y} \int_{-N}^N \Psi_\eta (\zeta) f_{2,-}^- (s) \Big[\psi_\eta (-\eta + N - \xi) \nonumber \\
& & \hs{-4mm}  \hs{55mm} - \psi_\eta (-\eta + N + \xi)\Big] \alpha^\lam d\zeta dy dm^1 (s,x,\xi) \nonumber \\ \label{nagaisiki}
\end{eqnarray}where $\alpha^\lam = \alpha (x,\xi,y,\zeta) \lam (x) $. Noting that $C_c^\infty (\mR^d_x \times \mR_\xi) \otimes C_c^\infty (\mR^d_y \times \mR_\zeta)$ is dense in $C_c^\infty (\mR_x^d \times \mR_\xi \times \mR_y^d \times \mR_\zeta)$ and that $m^i$ and $\nu^i$, $i=1,2$, vanish for large $\xi$ thanks to (\ref{vanish}) and (\ref{Young}), by an approximation argument we can take $\alpha(x,\xi,y,\zeta) = \rho_\ep^\lam (y-x) \psi_\de (\xi - \zeta)$ in (\ref{nagaisiki}). In this case note that $\alpha^\lam = \lam (x) \rho_\ep^\lam (y-x) \psi_\de (\xi - \zeta)$ and $\rho_\ep^\lam (y-x) = 0$ on $D_x^\lam \times  \pa D_y$. Using the identity $(\pa_\xi + \pa_\zeta) \psi_\de =0$, we compute the fourth and sixth terms on the right hand of (\ref{nagaisiki}) as follows.
\begin{eqnarray*}
& & \hs{-6mm} -\frac{1}{2} \mE \int_0^t \int_{D_x^\lam} \int_{D_y} \int_{-N}^N \int_{-N}^N \Psi_\eta (\xi , \zeta) f_{1,+} (s) \pa_\zeta \alpha^\lam G_2^2 d\nu^2_{s,y} (\zeta) d\xi dx dy ds \\
& & \hs{-6mm} = \frac{1}{2} \mE \int_0^t \int_{D_x^\lam} \int_{D_y} \int_{-N}^N \int_{-N}^N \Psi_\eta (\xi , \zeta) f_{1,+} (s) \pa_\xi \alpha^\lam G_2^2 d\nu^2_{s,y} (\zeta) d\xi dx dy ds \\
& & \hs{-6mm} = -\frac{1}{2} \mE \int_0^t \int_{D_x^\lam} \int_{D_y} \int_{-N}^N \int_{-N}^N \Psi_\eta (\zeta) \Big[ \psi_\eta (N-\eta - \xi) - \psi_\eta (N - \eta + \xi) \Big] \\
& & \hs{50mm}  \times f_{1,+} (s) \alpha^\lam G_2^2 d\nu^2_{s,y} (\zeta) d\xi dx dy ds \\
& & +\frac{1}{2} \mE \int_0^t \int_{D_x^\lam} \int_{D_y} \int_{-N}^N \int_{-N}^N \Psi (\xi , \zeta) \alpha^\lam G_2^2 d\nu_{s,x}^1 (\xi) d\nu^2_{s,y} (\zeta) dx dy ds \\
\end{eqnarray*}and
\begin{eqnarray*}
& & \hs{-6mm} \mE \int_0^t \int_{D_x^\lam} \int_{D_y} \int_{-N}^N \int_{-N}^N \psi_\eta (\xi , \zeta) f_{1,+}^- (s) \pa_\zeta \alpha^\lam d\xi dx dm^2 (s,y,\zeta) \\
& & \hs{-6mm} = - \mE \int_0^t \int_{D_x^\lam} \int_{D_y} \int_{-N}^N \int_{-N}^N \Psi_\eta (\zeta) \Big[ \psi_\eta (N-\eta - \xi) \\
& & \hs{24mm} - \psi_\eta (N-\eta +\xi) \Big] f_{1,+}^- (s) \alpha^\lam d\xi dx dm^2 (s,y,\zeta) \\
& & - \mE \int_0^t \int_{D_x^\lam} \int_{D_y} \int_{-N}^N \int_{-N}^N \Psi_\eta (\xi , \zeta) \alpha^\lam  d\nu_{s,x}^{1,-} (\xi) dx dm^2 (s,y,\zeta) \\
& & \hs{-6mm} \leq - \mE \int_0^t \int_{D_x^\lam} \int_{D_y} \int_{-N}^N \int_{-N}^N \Psi_\eta (\zeta) \Big[ \psi_\eta (N-\eta - \xi) \\
& & \hs{24mm} - \psi_\eta (N-\eta +\xi) \Big] f_{1,+}^- (s) \alpha^\lam d\xi dx dm^2 (s,y,\zeta).
\end{eqnarray*}Similarly, the ninth and eleventh terms can be computed. We then calculate the terms produced by the truncation function $\Psi_\eta$, namely, the terms containing the functions $\psi_\eta (-\eta + N \pm\xi)$ or $\psi_\eta (-\eta + N \pm \zeta)$.
\begin{eqnarray*}
& & \frac{1}{2} \mE \int_0^t \int_{D_x^\lam} \int_{D_y} \int_{-N}^N \int_{-N}^N \Psi_\eta (\xi) f_{1,+} (s) \psi_\eta (-\eta + N \pm \zeta) \\
& & \hs{65mm} \times G_2^2 \alpha^\lam d \nu_{s,y}^2 (\zeta) d\xi dx dy ds \\
& & \leq C \mE \int_0^t \int_{D_x^\lam} \int_{D_y} \int_{-N}^N \int_{-N}^N \psi_\eta (-\eta + N \pm \zeta) \la 1 + \lab \zeta \rab^2 \ra \\
& & \hs{45mm} \times \rho_\ep^\lam (y-x) \psi_\de (\xi - \zeta) d \nu_{s,y}^2 (\zeta) d\xi dx dy ds \\
& &  \leq C \int_{\mR} \psi_\eta (-\eta + N \pm \zeta) \la 1 + \lab \zeta \rab^2 \ra \mE \int^T_0 \int_D d \nu_{s,y}^2 (\zeta) dy ds \\
& & = C \int_{\mR} \psi_\eta (-\eta + N \pm \zeta) \la 1 + \lab \zeta \rab^2 \ra d\mu_{\nu^2} (\zeta) \to \mp C (1+N^2) \mu_{\nu^2} ' ( \pm N)
\end{eqnarray*}as $\eta \to +0$ by virtue of Lemma \ref{limit}, where $\mu_{\nu^2}$ is defined by (\ref{munu}). A similar argument yields that all the other terms containing the function $\psi_\eta$ on the right hand of (\ref{nagaisiki}) are estimated from above as $\eta \to +0$ by
\begin{gather}
 I_N = C \la \mu_{m^1} ' (N) + \mu_{m^2} ' (N) + (1+N^2) (\mu_{\nu^1} ' (N) + \mu_{\nu^2} ' (N))  \ra , \label{IN}
\end{gather}which is convergent to $0$ as $N \to \infty$ by Lemma \ref{limit}. Consequently, letting $\eta \to +0$ in (\ref{nagaisiki}) and then using the identity $(\nabla_x + \nabla_y ) \rho_\ep^\lam =0$ in the third term on the right hand we obtain (\ref{doubling}) with $I_N$ defined by (\ref{IN}).
\end{proof}

\noindent {\it Proof of Theorem \ref{contraction}} \ Set for $t \geq 0$ and $N > 0 $,
\begin{eqnarray*}
& & \eta^t_N (\ep , \de) = - \mE \int_{D_x^\lam} \int_{D_y} \int_{-N}^N \int_{-N}^N \lam (x) \rho_\ep^\lam (y-x) \psi_\de (\xi - \zeta) \\
& & \hs{50mm} \times f_{1,+} (t,x,\xi) f_{2,-} (t,y,\zeta) d\xi d\zeta dx dy \\
& & \hs{15mm} + \mE \int_{D^\lam} \int_{-N}^N \lam (x) f_{1,+} (t, x ,\xi ) f_{2,-} (t,x,\xi) d\xi dx.
\end{eqnarray*}It is easy to see that $\lim_{\ep , \de \to 0} \eta_N^t (\ep , \de) = 0$ uniformly in $N$. Also set
\begin{eqnarray*}
& & \hs{-4mm} r_N (\ep , \de) = - \mE \int_0^t \int_{\pa D_x^\lam} \int_{D_y} \int_{-N}^N \int_{-N}^N \lam (x) \rho_\ep^\lam (y-x) \psi_\de (\xi - \zeta) \\
& &  \hs{16mm} \times (-a(\xi) \cdot {\bf n} (x) ) \bar{f}^{(\lam)}_{1,+} (s,x,\xi) f_{2,-} (s,y,\zeta) d\xi d\zeta d\si (x) dy ds \\
& &  \hs{1mm} + \mE \int_0^t \int_{\pa D_x^\lam} \int_{-N}^N \lam (x) (-a (\xi) \cdot {\bf n} (x) ) \bar{f}^{(\lam)}_{1,+} (s,x,\xi) \bar{f}_{2,-}^{(\lam)} (s,x,\xi) d\xi d\si (x) ds.
\end{eqnarray*}Since there exists a sequence $\{ \ep_n \} \downarrow 0$ such that $f_{2,-} * \rho_{\ep_n}^\lam$ converges as $n \to \infty$ to $\bar{f}^{(\lam)}_{2,-}$ in $L^\infty (\Si^\lam \times \mR)$-weak*, we see that $\lim_{\ep_n, \de \to 0} r_N (\ep_n , \de) = 0$ for each $N > 0$. Therefore, it follows from Proposition \ref{Doubling} that 
\begin{eqnarray*}
& & \hs{-6mm} -\mE \int_{D^\lam} \int_{-N}^N \lam (x) f_{1,+}^+ (t , x,\xi ) f_{2,-}^+ (t , x, \xi) d\xi dx \\
& & \hs{-6mm} \leq  -\mE \int_{D^\lam} \int_{-N}^N \lam (x) f_{1,+}^0 ( x,\xi ) f_{2,-}^0 ( x, \xi) d\xi dx \\
& & -\mE \int_0^t \int_{\pa D^\lam} \int_{-N}^N \lam (x) (-a (\xi) \cdot {\bf n} (x)) \bar{f}^{(\lam)}_{1,+} (s, x,\xi ) \bar{f}^{(\lam)}_{2,-} (s, x, \xi) d\xi d\si (x) ds \\
& & + I_1 + I_2 + I_3 + I_N + \eta_N^t (\ep_n , \de) + \eta_N^0 (\ep_n , \de) + r_N (\ep_n , \de).
\end{eqnarray*}On the domain $U_{\lam_0}$ a similar argument also deduces the same inequality as above, but the term on the boundary $\pa D^{\lam_0}$ vanishes. By virtue of Lemma \ref{ggghhh} (iii) it holds that $a(\xi) \cdot {\bf n} (x) \bar{f}_{2,-}^{(\lam)} = a (\xi) \cdot {\bf n} (x) \bar{f}_{2,-} $ a.e. on $[0,T) \times \pa D^\lam \times (-N,N)$, and hence
\begin{eqnarray*}
& & \hs{-6mm} \sum_{i=0}^M \mE \int_0^t \int_{\pa D^{\lam_i}} \int_{-N}^N \lam_i (x) (-a (\xi) \cdot {\bf n } (x)) \bar{f}_{1,+}^{(\lam_i)} (s,x,\xi) \bar{f}_{2,-}^{(\lam_i)} (s,x,\xi) d\xi d \si (x) ds \\
& & = \mE \int_0^t \int_{\pa D^{\lam_i}} \int_{-N}^N (-a (\xi) \cdot {\bf n} (x)) \bar{f}_{2,-} \sum_{i=1}^M \lam_i \bar{f}_{1,+}^{(\lam_i)} d\xi d \si (x) ds \\
& & = \mE \int_0^t \int_{\pa D^{\lam_i}} \int_{-N}^N (-a (\xi) \cdot {\bf n} (x)) \bar{f}_{1,+} \bar{f}_{2,-} d\xi d \si (x) ds .
\end{eqnarray*}Here recall that $\bar{f}_{1,+} = \sum_{i=0}^M \lam_i \bar{f}_{1,+}^{(\lam_i)}$. Thus, summing over $i = 0, \ldots , M$ yields
\begin{eqnarray}
& & \hs{-6mm} -\mE \int_{D} \int_{-N}^N f_{1,+}^+ (t , x,\xi ) f_{2,-}^+ (t , x, \xi) d\xi dx \nonumber \\
& & \hs{-6mm} \leq  -\mE \int_{D} \int_{-N}^N f_{1,+}^0 ( x,\xi ) f_{2,-}^0 ( x, \xi) d\xi dx \nonumber \\
& & -\mE \int_0^t \int_{\pa D} \int_{-N}^N (-a (\xi) \cdot {\bf n} (x)) \bar{f}_{1,+} (s, x,\xi ) \bar{f}_{2,-} (s, x, \xi) d\xi d\si (x) ds \nonumber \\
& & + \sum_{i=0}^M ( I_1 + I_2 + I_3 + I_N + \eta_N^t (\ep , \de) + \eta_N^0 (\ep , \de) + r_N (\ep , \de)). \label{summing}
\end{eqnarray}Now note that 
\begin{gather}
 \lim_{\ep , \de \to 0} \sum_{i=0}^M I_2 = -\mE \int_0^t \int_D \int_{-N}^N \nabla (\sum_{i=0}^M \lam_i) a (\xi ) f_{1,+} f_{2,-} d\xi dx ds  =0. \label{fffvvv}
\end{gather}
In a similar way as in the proof of \cite[Theorem 11]{a} we obtain
\begin{gather}
 \lab I_1 \rab \leq C \de \ep^{-1}, \hs{3mm} \lab I_2 \rab \leq C \la \ep^2 \de^{-1} + r (\de) \ra. \label{maenoyatu}
\end{gather}Finally, we compute the boundary term on the right hand side of (\ref{summing}) as follows:
\begin{eqnarray}
& & \hs{-6mm} -\int_{-N}^N (-a \cdot {\bf n}) \bar{f}_{1,+} \bar{f}_{2,-} d\xi \nonumber \\
& & \hs{-6mm} = -\int_{-N}^{u_{2,b}} (-a \cdot {\bf n}) \bar{f}_{1,+} \bar{f}_{2,-} d\xi -\int_{u_{2,b}}^{u_{1,b} \vee u_{2,b}} (-a \cdot {\bf n}) \bar{f}_{1,+} \bar{f}_{2,-} d\xi \nonumber \\
& & -\int_{u_{1,b} \vee u_{2,b}}^N (-a \cdot {\bf n}) \bar{f}_{1,+} \bar{f}_{2,-} d\xi \nonumber \\
& & \hs{-6mm} \leq  -\int_{-N}^{u_{2,b}} \bar{f}_{1,+} \pa_\xi \bar{m}_N^{2,-} d\xi + M_b \int_{u_{2,b}}^{u_{1,b} \vee u_{2,b}} d\xi  -\int_{u_{1,b} \vee u_{2,b}}^N \pa_\xi \bar{m}_N^{1,+} \bar{f}_{2,-} d\xi \nonumber \\
& & \hs{-6mm} \leq - M_b \int_{\mR} f_{1,+}^b f_{2,-}^b d\xi. \label{sugoiyatu}
\end{eqnarray}Now we take $\de = \ep_n^{4/3}$. Letting $\ep_n \to 0$ and then letting $N \to \infty$, we immediately deduce (\ref{contraction2}) from (\ref{summing}), (\ref{fffvvv}), (\ref{maenoyatu}) and (\ref{sugoiyatu}).

\noindent {\it Proof of Corollary \ref{uniqueness}} \ Let $f_+$ be a generalized solution to (\ref{SSCL1})-(\ref{SSCL3}) with the initial datum ${\bf 1}_{u_0 > \xi}$ and the boundary datum ${\bf 1}_{u_b > \xi}$. It follows from Theorem \ref{contraction} and Lemma \ref{abcdefg} that for $t \in [0,T)$,
\begin{gather*}
 \mE \int_D \int_\mR f_+^\pm (t , x, \xi) (1-f_+^\pm (t , x , \xi)) d\xi dx = 0.
\end{gather*}By Fubini's theorem, for $t \in [0,T)$ there is a set $E_t$ of full measure in $\Omega \times D$ such that, for $(\omega , x) \in E_t$, $f_+^\pm (\omega , t, x,\xi ) \in \{ 0,1 \}$ for a.e. $\xi \in \mR$. Since $f_+^\pm (t , x, \xi ) = \nu^\pm_{t,x} (\xi , \infty)$ with a Young measure $\nu^\pm$ on $\Omega \times Q$, there exists $u^\pm (\omega , t,x) \in \mR$ such that $f_+^\pm (\omega , t , x, \xi) = {\bf 1}_{u^\pm (\omega , t,x ) > \xi}$ for a.e. $(\omega , x , \xi )$. This gives that $u^\pm (\omega , t,x)= \int_\mR (f_+^\pm (\omega , t , x,\xi ) - {\bf 1}_{\xi < 0}) d\xi$ and hence $u^\pm$ are predictable. Moreover, (\ref{esti23}) is a direct consequence of (\ref{esti26}). Consequently, we see that $u^+$ (which equals $u^- $ for a.e. $t \in [0,T)$ from Lemma \ref{abcdefg}) is a kinetic solution to (\ref{SSCL1})-(\ref{SSCL3}). \\

\section{Existence}

\indent We state the conditions under which one considers the existence of kinetic solutions.

\begin{description}
\item[${\rm (H_1')}$] The flux function $A$ is of class $C^2$ and the second derivative $A''$ is bounded on $\mR$.
\item[${\rm (H_3')}$] $u_0 \in C^2 (\overline{D})$ and $u_b \in L^\infty (\Si)$ are deterministic. Moreover, $u_b$ is the trace on $\Si$ of a function $U \in C([0,T] \times \overline{D})$ such that $\pa_t U \in C^{\alpha,0} ([0,T] \times D)$, $\Delta U \in C^{\alpha,0} ([0,T] \times \overline{D})$, $U(t, \cdot ) \in W^{2,p} (D)$ for some $\alpha \in (0,1)$ and for any $p>1$. 
\end{description}

It is shown in \cite[Remark 5.1.14]{r} that initial boundary value problem
\begin{eqnarray}
\left\{ \begin{array}{ll} \label{eac}
\pa_t \ti{u}^\ep = \ep \Delta \ti{u}^\ep &  \hs{5mm} {\rm in} \ Q \\
\ti{u}^\ep (0) = 0 & \hs{5mm} {\rm on} \ D \\
\ti{u}^\ep = u_b & \hs{5mm} {\rm on} \ \Sigma \\
\end{array} \right.
\end{eqnarray}admits a unique solution $\ti{u}^\ep \in C ([0,T] \times \ov{D})$. Moreover, all of $\ti{u}^\ep$, $\nabla \ti{u}^\ep$, $\pa_t \ti{u}^\ep$, $\ep \Delta \ti{u}^\ep$ exist and are bounded in $L^\infty (Q)$ uniformly in $\ep \in (0,1]$. \\
\indent The purpose of this section is to prove the following:

\begin{theorem} \label{existence1} {\it Assume that $D$ is a bounded convex domain with $C^2$ boundary. Let the assumptions (${\rm H_1'}$), (${\rm H_2}$) and (${\rm H_3'}$) hold true. Then there exists a unique kinetic solution to} (\ref{SSCL1})-(\ref{SSCL3}).
\end{theorem}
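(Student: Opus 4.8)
The plan is to construct the solution by the vanishing viscosity method, adapting the argument of \cite{a} to the nonhomogeneous Dirichlet setting. Since uniqueness is already supplied by Corollary \ref{uniqueness}, the entire burden lies in existence. First I would introduce, for each $\ep \in (0,1]$, the stochastic parabolic approximation
\[
 du^\ep + {\rm div}(A(u^\ep))\,dt = \ep \Delta u^\ep\,dt + \Phi(u^\ep)\,dW(t) \hs{3mm} {\rm in} \hs{1mm} Q,
\]
with $u^\ep(0) = u_0$ in $D$ and $u^\ep = u_b$ on $\Si$. To deal with the nonhomogeneous boundary datum I would split $u^\ep = v^\ep + \ti{u}^\ep$, where $\ti{u}^\ep$ is the deterministic lift solving (\ref{eac}); the uniform $L^\infty(Q)$ bounds on $\ti{u}^\ep$, $\nabla \ti{u}^\ep$, $\pa_t \ti{u}^\ep$ and $\ep \Delta \ti{u}^\ep$ recorded after (\ref{eac}) are precisely what is needed so that $v^\ep$ solves a stochastic parabolic problem with \emph{homogeneous} Dirichlet boundary conditions and a perturbed but controlled drift and noise. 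Existence and uniqueness of $u^\ep$ then follow from the standard variational/monotonicity theory for quasilinear SPDEs, using (H$_2$) for the Lipschitz and growth structure of $\Phi$ and (H$_1'$) for the $C^2$, bounded-$A''$ flux.

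Next I would establish the a priori estimates that are uniform in $\ep$. Applying It\^o's formula to $\|u^\ep\|_{L^p(D)}^p$ and using (\ref{H21}), I expect $\mE \sup_{0 \le t \le T}\|u^\ep(t)\|_{L^p(D)}^p \le C_p$ for every $p \in [1,\infty)$, uniformly in $\ep$, together with the dissipation bound $\ep\,\mE \int_Q |\nabla u^\ep|^2 \le C$. These are the viscous analogues of (\ref{esti23})--(\ref{esti26}). The geometry of $D$ enters exactly here: the convexity of $D$ and the $C^2$ regularity of $\pa D$ provide the correct sign in the integration-by-parts identities on $\pa D$ and allow one to absorb the boundary contributions generated by the lift $\ti{u}^\ep$, so that no uncontrolled boundary layer forms as $\ep \to 0$.

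I would then derive the kinetic formulation satisfied by $f^\ep_+ = {\bf 1}_{u^\ep > \xi}$ and $f^\ep_- = f^\ep_+ - 1$. Applying It\^o's formula to $\theta(u^\ep - \xi)$ and testing against $\ph \in C^\infty_c([0,T)\times\ov{D}\times(-N,N))$ produces an identity of the form (\ref{KIFO}) in which the interior kinetic measure is identified as $m^\ep = \ep|\nabla u^\ep|^2\,\de_{\xi = u^\ep}$, while the boundary flux yields the truncated measures $\bar{m}^{\pm,\ep}_N$ as the nonnegative traces of the parabolic dissipation on $\Si$; the truncation built into Definition \ref{renoma} is exactly what keeps these boundary quantities in $L^1(\Omega \times \Si \times (-N,N))$. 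Using the uniform estimates, $m^\ep$ is tight in $\mathcal{M}^+_b$, $f^\ep_\pm$ converges weak-$*$ to a kinetic function $f_\pm$ associated with a Young measure $\nu$ satisfying (\ref{Young}), and the martingale term converges; passing to the limit $\ep \to 0$ (after, if necessary, a stochastic-compactness/Skorokhod step to upgrade the convergence of the nonlinear terms $g_k(x,u^\ep)$ and $A(u^\ep)$) shows that the limit $f_+$ is a generalized kinetic solution in the sense of (\ref{GKIFO}), with the boundary measures inheriting nonnegativity, predictability and the vanishing property $\bar{m}^+_N(N-0) = \bar{m}^-_N(-N+0) = 0$.

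Finally, Corollary \ref{uniqueness} applies to the limiting generalized kinetic solution $f_+$ with data $({\bf 1}_{u_0 > \xi}, {\bf 1}_{u_b > \xi})$: it simultaneously reduces $f_+$ to the indicator ${\bf 1}_{u > \xi}$ of a genuine kinetic solution $u$ and yields uniqueness, completing the proof of Theorem \ref{existence1}. The main obstacle, I expect, is the control of the boundary defect measures $\bar{m}^{\pm,\ep}_N$ in the limit $\ep \to 0$: one must show that their $L^1(\Omega\times\Si\times(-N,N))$ mass stays bounded uniformly in $\ep$, so that a limit exists, and that the limit retains the boundary identity of Proposition \ref{ggghhh}(iii) relating $-a(\xi)\cdot{\bf n}\,\bar{f}_\pm$ to $M_N f^b_\pm + \pa_\xi \bar{m}^\pm_N$. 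This is where the convexity of $D$, the $C^2$ boundary, and the fine properties of the lift $\ti{u}^\ep$ must be combined, since in the absence of a pathwise $L^\infty$ bound on $u^\ep$ the boundary trace cannot be controlled by elementary means.
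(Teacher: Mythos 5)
Your skeleton --- viscous approximation, subtraction of the parabolic lift $\ti{u}^\ep$ solving (\ref{eac}), uniform $L^p$ and dissipation estimates, kinetic formulation for $f^\ep={\bf 1}_{u^\ep>\xi}$ with $m^\ep=\ep|\nabla u^\ep|^2\de_{\xi=u^\ep}$, weak* compactness via Theorem \ref{T43}, and reduction by Corollary \ref{uniqueness} --- coincides with the paper's. But the step you yourself single out as ``the main obstacle,'' namely producing the boundary defect measures $\bar m^{\pm}_N$ in the limit, is precisely the step for which you supply no mechanism, and the mechanism you sketch (defining $\bar m^{\pm,\ep}_N$ as ``nonnegative traces of the parabolic dissipation on $\Si$'' and bounding their $L^1(\Omega\times\Si\times(-N,N))$ mass uniformly in $\ep$) is not what the paper does and would likely fail: there is no uniform control of $\ep|\nabla u^\ep|^2$ near $\pa D$, and at the viscous level $u^\ep$ attains the boundary datum in the trace sense, so no boundary kinetic defect is visible before the limit. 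This is a genuine gap.

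The paper's device is different and is the heart of the proof. One tests the viscous kinetic identity (\ref{formulation}) against $\Theta^+_\ep\,\ph^\lam$, where $\Theta_\ep(x)=1-\exp\la-\frac{M_N+\ep L}{\ep}\,s(x)\ra$ is an exponential boundary layer built from the truncated distance function $s$; the weak differential inequality (\ref{kkklll}) (taken from M\'alek--Ne\v{c}as--Rokyta--R\r{u}\v{z}i\v{c}ka, and this is where the convexity and $C^2$ regularity of $\pa D$ actually enter --- not in the energy estimates, where the homogeneous condition $v^\ep_n=0$ on $\Si$ already annihilates the boundary terms) absorbs the transport and viscous contributions near $\pa D$ into the single term $(M_N+\ep L)\int_{\pa D}f^b\ph$, producing the one-sided kinetic inequality (\ref{kinetic futousiki}). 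In the limit $\ep\to0$ one then has an interior kinetic \emph{equality} (\ref{GKIFO4}) together with a boundary kinetic \emph{inequality} (\ref{GKIFO5}); comparing the two through the weak* trace $\bar f^{(\lam)}_+$ of the limiting kinetic function yields $M_N f^b+(a\cdot{\bf n})\bar f^{(\lam)}_+\ge0$ a.e., and the boundary measure is then \emph{defined explicitly} at the hyperbolic level by $\bar m^{+}_N(t,x,\xi)=M_N(u_b(t,x)-\xi)^+-\int_\xi^N(-a(\eta)\cdot{\bf n}(x))\bar f_+(t,x,\eta)\,d\eta$, which is nonnegative and satisfies $\bar m^+_N(N)=0$ by construction, giving (\ref{GKIFO}). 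Without this construction (or an equivalent substitute) your argument does not close; the rest of your outline is sound and matches the paper.
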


\begin{theorem}[The case of homogeneous Dirichlet boundary condition] \label{existence2} {\it Assume that $D$ is a bounded convex domain with $C^2$ boundary. Assume that $u_0 \in C^2 (\ov{D})$ and $u_b \equiv 0$. If the assumptions (${\rm H_1}$) and (${\rm H_2}$) hold true, then the problem}(\ref{SSCL1})-(\ref{SSCL3}) {\it has a unique kinetic solution.}
\end{theorem}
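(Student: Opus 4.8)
The plan is to derive Theorem~\ref{existence2} from Theorem~\ref{existence1} by approximating the flux by fluxes with bounded second derivative and then passing to the limit. Observe first that, since $u_b \equiv 0$, hypothesis $({\rm H_3'})$ holds trivially with $U \equiv 0$, so the only discrepancy between the two theorems is that $({\rm H_1})$ allows $A''$ to grow polynomially while $({\rm H_1'})$ demands it be bounded. Accordingly I fix $A_n \in C^2(\mR;\mR^d)$ with $A_n''$ bounded on $\mR$ --- e.g.\ freeze $a_n := A_n'$ to be constant outside $[-n,n]$ and mollify --- arranged so that $A_n \to A$ and $a_n \to a$ uniformly on compact subsets of $\mR$. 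For every $n$ the triple $({\rm H_1'})$, $({\rm H_2})$, $({\rm H_3'})$ is in force, so Theorem~\ref{existence1} yields a unique kinetic solution $u_n$ of (\ref{SSCL1})--(\ref{SSCL3}) with flux $A_n$ and data $(u_0,0)$; denote by $f_{n,+} = {\bf 1}_{u_n > \xi}$ its kinetic function, $m_n$ its kinetic measure, and $\bar{m}^{\pm}_{n,N}$ its boundary measures.

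The crucial observation is that the bounds (\ref{esti23}) are \emph{uniform in $n$}. Applying It\^o's formula to $\int_D \theta(u_n)\,dx$ for a convex $\theta$ of polynomial growth, the flux contribution equals $-\mE\int_0^s\int_D \nabla\!\cdot Q_n(u_n)\,dx\,dr = -\mE\int_0^s\int_{\pa D} Q_n(u_n)\cdot{\bf n}\,d\si\,dr$, where $Q_n' = \theta'\,a_n$; because the boundary datum vanishes this reduces to $Q_n(0)\cdot\int_{\pa D}{\bf n}\,d\si = 0$. Hence the $L^p(\Omega\times D)$ bounds depend only on $L$, $p$, $T$ and $\|u_0\|_{L^\infty}$, and the same cancellation (with $\theta(\xi)=\xi^2$) controls the total mass and the $\xi$-moments of $m_n$ uniformly in $n$. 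This flux-independence of the energy balance is precisely what the homogeneous boundary condition provides and what is unavailable in the inhomogeneous situation of Theorem~\ref{existence1}.

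With these uniform estimates I extract, along a subsequence, a weak-$*$ limit $f_+$ of $\{f_{n,+}\}$ in $L^\infty(\Omega\times Q\times\mR)$; it is again a kinetic function whose Young measure $\nu$ satisfies (\ref{esti26}), while the uniformly bounded $m_n$ and $\bar{m}^{\pm}_{n,N}$ converge, after a further extraction, to a kinetic measure $m$ and boundary measures $\bar{m}^{\pm}_N$. I then pass to the limit in the formulation (\ref{GKIFO}) written for the flux $A_n$: on each truncation interval $(-N,N)$ the convergence $a_n \to a$ is uniform, so the transport term $\int f_{n,\pm}\,a_n(\xi)\cdot\nabla\ph$ tends to $\int f_{\pm}\,a(\xi)\cdot\nabla\ph$, the stochastic and It\^o-correction terms converge by the uniform moments together with $({\rm H_2})$, and the measure terms converge weakly. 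This exhibits $f_+$ as a generalized kinetic solution of (\ref{SSCL1})--(\ref{SSCL3}) for the genuine flux $A$ with data $({\bf 1}_{u_0>\xi},{\bf 1}_{0>\xi})$; Corollary~\ref{uniqueness} then furnishes a kinetic solution $u$ with $f_+ = {\bf 1}_{u>\xi}$, whose uniqueness is the contraction of Theorem~\ref{contraction}.

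The step I expect to be the main obstacle is the passage to the limit in the boundary objects. One must check that the limiting boundary measures $\bar{m}^{\pm}_N$ remain predictable and, decisively, retain the cutoff property $\bar{m}^{+}_N(N-0)=\bar{m}^{-}_N(-N+0)=0$ for large $N$, and that the boundary flux term $M_N\int_{\Si\times\mR} f^b_{n,\pm}\ph$ --- whose constant $M_N = \max_{[-N,N]}|a_n|$ itself varies with $n$ --- stabilises in the limit. Since there is no pathwise $L^\infty$ control on $u_n$, these traces must be handled through the weak-trace analysis of Proposition~\ref{ggghhh} and the $N$-truncation rather than by boundedness; once the boundary terms are shown to converge consistently, the interior limits follow routinely from the uniform estimates.
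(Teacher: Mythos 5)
Your route is genuinely different from the paper's, and much heavier than it needs to be. The paper's entire proof of Theorem~\ref{existence2} is the observation that in the construction for Theorem~\ref{existence1} the boundedness of $A''$ is used only in the energy estimate (\ref{energy1}), where $\lno A''_n \rno_{L^\infty}$ appears multiplied by $\lno \nabla \ti{u}^\ep \rno_{L^\infty}$; when $u_b \equiv 0$ the auxiliary function $\ti{u}^\ep$ solving (\ref{eac}) is identically zero, so these terms vanish and the whole existence argument of Section 4 runs verbatim under $({\rm H_1})$ alone. No flux truncation and no second limit passage are required. Your plan of approximating $A$ by fluxes $A_n$ with bounded $A_n''$, invoking Theorem~\ref{existence1} for each $n$, and passing to the limit is conceptually admissible, but it reproves existence the hard way: to get the uniform-in-$n$ bounds you would in any case have to go back to the parabolic level and make exactly the observation above.

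As written, the proposal has two genuine gaps. First, your justification of the uniform $L^p$ bounds applies It\^o's formula to $\int_D \theta(u_n)\,dx$ for the kinetic solution itself and discards the flux term as a boundary integral that vanishes ``because the boundary datum vanishes.'' Kinetic solutions do not attain the boundary datum in the trace sense --- the whole point of the formulation (\ref{KIFO}) is that the boundary behaviour is encoded in the defect measures $\bar{m}^\pm_N$ and the weak traces $\bar{f}_\pm$ of Proposition~\ref{ggghhh}, which need not vanish --- so this cancellation is not available at the level of $u_n$; it is only available for the viscous approximation $v^\ep_n$, where the Dirichlet condition holds in the strong sense. Second, and as you yourself flag, the passage to the limit in the boundary objects is left open: weak-$*$ convergence of $f_{n,+}$ in $L^\infty(\Omega\times Q\times\mR)$ does not imply convergence of the weak traces $\bar{f}^{(\lam)}_{n,\pm}$ (traces are not weak-$*$ continuous), hence the convergence of $\bar{m}^\pm_{n,N}$ to boundary functions satisfying $\bar{m}^+_N(N-0)=\bar{m}^-_N(-N+0)=0$ and the identification of Proposition~\ref{ggghhh}(iii) for the limit are not established. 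One also needs the decay (\ref{vanish}) of the kinetic measures $m_n$ to hold uniformly in $n$ before the limit $m$ can be called a kinetic measure. Declaring this step ``the main obstacle'' without closing it leaves the proof incomplete precisely where the difficulty lies; the paper's one-line reduction avoids the issue entirely.
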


To prove the theorems we consider the following homogeneous Dirichlet boundary problem
\begin{eqnarray}
\left\{ \begin{array}{ll} \label{approximation}
dv^{\ep} + {\rm div} \hs{0.5mm} A(v^{\ep} + \ti{u}^\ep) dt = \ep \Delta v^{\ep} dt + \Phi^\ep (v^{\ep} + \ti{u}^\ep) dW(t) &  \hs{5mm} {\rm in} \ \Omega \times Q \\
v^{\ep} (0) = u_0 & \hs{5mm} {\rm on} \ \Omega \times  D \\
v^{\ep} = 0 & \hs{5mm} {\rm on} \ \Omega \times  \Sigma \\
\end{array} \right.
\end{eqnarray}where $\Phi^\ep$ is a suitable Lipschitz approximation of $\Phi$ satisfying (\ref{H21}), (\ref{H22}) uniformly in $\ep$. The functions $g_k^\ep$ and $G^{\ep,2}$ will be defined as in the case $\ep = 0 $. By the mean value theorem
\begin{gather*}
 A (r + \ti{u}^\ep) = A (r) + a (r + \theta(t,x) \ti{u}^\ep (t,x) ) \ti{u}^\ep (t,x)
\end{gather*}with $\theta (t,x) \in (0,1) $. Set $g_1 (t,x,r) = a (r + \theta (t,x) \ti{u}^\ep (t,x)) \ti{u}^\ep (t,x) $ and $g_2 (t,r) = A (r)$. By our assumption on the flux $A$ we have that $\lab g_1 (t,x,r) \rab \leq C (1 + \lab r \rab)$, $r \in \mR$, with some constant $C$ independent of $\ep$ and that $g_2$ has at most polynomial growth. Thus, thanks to \cite{h} equation (\ref{approximation}) admits a unique $L^p (D)$ valued continuous solution provided $p$ is large enough and $u_0$ is an $\mF_0$-measurable $L^p (D)$ valued random element. \\
\indent To obtain some energy estimates on equation (\ref{approximation}) we truncate $A$ and $\Phi^\ep$ as follows: Let $A_n (r)$ and $\Phi^\ep_n (r)$ be continuous functions for every integer $n$, such that they are globally Lipschitz, $A_n = A$, $\Phi_n^\ep = \Phi^\ep$ for $\lab r \rab \leq n$, and $A_n = \Phi_n^\ep = 0$ for $\lab r \rab \geq n+1$. Moreover, $A_n$ and $\Phi^\ep_n$ satisfy the same Lipschitz constants and the (same) polynomial growth as $A$ and $\Phi^\ep$, respectively. The functions $g_{n,k}^\ep$ and $G_n^{\ep ,2}$ will be defined as in the case $\ep = 0$. We also chose a sequence of $L^\infty (D) \cap C^\infty (D) $-valued random variables $u_{0n}$ converging to $u_0$ in $L^p$ almost surely. In the same way as in the case of (\ref{approximation}), we have the existence of a unique solution of equation
\begin{gather}
dv_n^{\ep} + {\rm div} \hs{0.5mm} A_n(v_n^{\ep} + \ti{u}^\ep) dt = \ep \Delta v_n^{\ep} dt + \Phi_n^\ep (v_n^{\ep} + \ti{u}^\ep) dW(t) \label{approximation2}
\end{gather}with the Dirichlet boundary condition $v^\ep_n = 0$ on $\Si$ and the initial condition $v^\ep_n (0) = u_{0n}$ on $D$. By virtue of \cite[Lemma 4.3]{h} we have for $p \geq 2$, 
\begin{eqnarray}
& & \hs{-6mm} \lno v_n^\ep (t) \rno_{L^p(D)}^p + \ep p (p - 1) \int^t_0 \int_D \lab v_n^\ep \rab^{p-2} \lab \nabla v_n^\ep \rab^2 dx ds \nonumber \\
& & \hs{-6mm} \leq \lno u_{0n} \rno_{L^p (D)}^p + p \int_0^t \int_D \lab v_n^\ep \rab^{p-2} v_n^\ep A'_n (v_n^\ep + \ti{u}^\ep) \cdot \nabla (v_n^\ep + \ti{u}^\ep) dx ds \nonumber \\
& & + p \sum_{k=1}^\infty \int_0^t \int_D \lab v_n^\ep \rab^{p-2} v_n^\ep g_{n,k}^\ep (x, v_n^\ep + \ti{u}^\ep) dx d\beta_k(s) \nonumber \\
& & + \frac{1}{2} p (p-1) \int_0^t  \int_D \lab v_n^\ep \rab^{p-2}  G_n^{\ep ,2} (x, v_n^\ep + \ti{u}^\ep)  dx ds \hs{10mm} {\rm a.s.}  \label{energy}
\end{eqnarray}Let us consider the second term on the right hand side of (\ref{energy}). By the assumption that $A'' \in L^\infty (\mR) $ and by the Dirichlet boundary condition, one has
\begin{eqnarray}
& & \hs{-6mm} \lab \int_0^t \int_D \lab v_n^\ep \rab^{p-2} v_n^\ep A'_n (v_n^\ep + \ti{u}^\ep) \cdot \nabla (v_n^\ep + \ti{u}^\ep) dx ds \rab \nonumber \\
& & \hs{-6mm} \leq \lab \int_0^t \int_D {\rm div}  \la \int_0^{v_n^\ep} \lab \xi \rab^{p-2} \xi A'_n (\xi + \ti{u}^\ep) d\xi \ra dx ds \rab \nonumber \\
& & + \lab \int_0^t \int_D \int_0^{v_n^\ep} \lab \xi \rab^{p-2} \xi A''_n ( \xi + \ti{u}^\ep ) \cdot \nabla \ti{u}^\ep d\xi dx ds \rab \nonumber \\
& & + \lab \int_0^t \int_D \lab v_n^\ep \rab^{p-2} v_n^\ep A'_n (v_n^\ep + \ti{u}^\ep ) \cdot \nabla \ti{u}^\ep  dx ds \rab \nonumber \\
& & \hs{-6mm} \leq \frac{1}{p} \lno A''_n \rno_{L^\infty} \lno \nabla \ti{u}^\ep \rno_{L^\infty} \int_0^t \int_D \lab v_n^\ep \rab^p dx ds \nonumber \\
& & + C \lno \nabla \ti{u}^\ep \rno_{L^\infty} \int_0^t \int_D \lab v_n^\ep \rab^{p-1} (1 + \lab v_n^\ep + \ti{u}^\ep \rab) dx ds \nonumber \\
& & \hs{-6mm} \leq C \la 1 + \int_0^t \lno v_n^\ep (s) \rno_{L^p (D)}^p ds  \ra ,  \label{energy1}
\end{eqnarray}where and in what follows $C$ denotes various constants which may depend on $p$, $u_0$, $u_b$ and $T$, but not on $\ep$ as well as $n$. By (\ref{H21}) the fourth term is easily estimated:
\begin{eqnarray}
& & \hs{-6mm} \int_0^t \int_D \lab v_n^\ep \rab^{p-2}  G_n^{\ep , 2} (x, v_n^\ep + \ti{u}^\ep ) dx ds \nonumber \\
& & \leq C \la 1+ \int_0^t \lno v_n^\ep (s) \rno_{L^p(D)}^p ds \ra. \label{energy2}
\end{eqnarray}Thus, expectation and application of the Gronwall lemma yield
\begin{gather*}
 \mE \lno v_n^\ep (t) \rno_{L^p (D)}^p \leq C \la 1 + \mE \lno u_{0n} \rno_{L^p (D)}^p \ra .
\end{gather*}Furthermore, by using the Burkholder-Davis-Gundy inequality, we have (see \cite{a} and \cite{m})
\begin{gather}
 \mE \sup_{0 \leq t \leq T} \lno v_n^\ep (t) \rno_{L^p (D)}^p + \ep \mE \int_0^T \int_D \lab v_n^\ep \rab^{p-2} \lab \nabla v_n^\ep \rab^2 dx dt \leq C \label{energy3}
\end{gather}for every $p \geq 2$. Accordingly, by the same argument as in \cite{h}, using the Gy${\rm \ddot{o}}$ngy-Krylov characterization of convergence in probability (see \cite[Lemma 4.1]{ac}), we have that $v_n^\ep$ converges in $C([0,T] ; L^p (D) )$, in probability, to $v^\ep$ as $n \to \infty$. This convergence, together with (\ref{energy3}), deduces that, up to subsequence, $\lab v_n^\ep \rab^{\frac{p-2}{2}} \nabla v_n^\ep$ converges to $\lab v^\ep \rab^{\frac{p-2}{2}} \nabla v^\ep$, as $n \to \infty$, weakly in $L^2 (\Omega \times Q)$. Consequently, passing $n$ to infinity in (\ref{energy1}) yields: For every $p \geq 2$, 
\begin{gather}
\mE \sup_{0 \leq t \leq T} \lno v^\ep \rno_{L^p (D)}^p + \ep \mE \int_0^T \int_D \lab v^\ep \rab^{p-2} \lab \nabla v^\ep \rab^2 dx dt \leq C. \label{energy4}
\end{gather}Next (\ref{energy}), together with (\ref{energy1}) and (\ref{energy2}), gives
\begin{eqnarray*}
& & \hs{-6mm} \ep p (p-1) \int_0^T \int_D \lab v^\ep  \rab^{p-2} \lab \nabla v^\ep \rab^2 dx ds \\
& & \hs{-6mm} \leq \lno u_0 \rno_{L^p (D)}^p + C \la \int_0^T  \int_D \lno v^\ep (s) \rno_{L^p (D)}^p  ds +1  \ra \\
& & +  p \int_0^T \int_D \lab v^\ep \rab^{p-2} v^\ep \Phi^\ep (v^\ep + \ti{u}^\ep) dx dW(s).
\end{eqnarray*}Taking the square, then expectation, we deduce by the It${\rm \hat{o}}$ isometry
\begin{eqnarray*}
& & \hs{-6mm} \mE \lab \int_0^T \int_D \ep \lab v^\ep \rab^{p-2} \lab \nabla v^\ep \rab^2 dx ds  \rab \\
& & \hs{-6mm} \leq C \lno u_0 \rno_{L^p (D)}^{2p} + C \mE \la \int_0^T \lno v^\ep (s) \rno_{L^p (D)}^p ds + 1 \ra^2 \\
& & + \mE \int_0^T \sum_{k=1}^\infty \lab \int_D \lab v^\ep \rab^{p-2} v^\ep g_k^\ep (x , v^\ep + \ti{u}^\ep) dx \rab^2 dt 
\end{eqnarray*}By (\ref{H21}), (\ref{energy4}) and the Cauchy-Schwartz inequality,
\begin{gather}
\mE \lab \int_0^T \int_D \ep \lab v^\ep \rab^{p-2} \lab  \nabla v^\ep \rab^2 dx ds  \rab^2 \leq C. \label{bounded}
\end{gather}Define an $L^p (D)$-valued process $u^\ep (t) $ by $u^\ep (t) = v^\ep (t) + \ti{u}^\ep (t)$. It satisfies: a.s., for all $t \in [0,T]$ and $\phi \in C_c^\infty (D)$,
\begin{eqnarray*}
& &  \int_D u^\ep \phi dx = \int_D u_0 \phi dx + \int_0^t \int_D A (u^\ep ) \nabla \phi dx ds \\
& & \hs{10mm} + \ep \int_0^t \int_D u^\ep \Delta \phi dx ds + \sum_{k=1}^\infty \int_0^t \int_D g_k^\ep (x, u^\ep) \phi dx d\beta_k (s),
\end{eqnarray*}in short,
\begin{gather*}
d u^\ep + {\rm div} A (u^\ep) dt - \ep \Delta u^\ep dt = \Phi^\ep (u^\ep ) dW (t) .
\end{gather*}By the smooth approximations as in \cite[Proposition 18]{a} and \cite{aa} (also see \cite{ad} and \cite{ab}) or by the generalized It\^o formula in \cite[Proposition A.1]{t} we obtain that $f^\ep = {\bf 1}_{u^\ep > \xi} $ satisfies the kinetic formulation, more precisely, a.s., for all $\ph \in C_c^\infty ([0,T) \times D \times \mR)$
\begin{eqnarray}
& & -\int_0^T \int_D \int_\mR f^\ep \pa_t \ph d\xi dx dt = \int_D \int_\mR f_0 \ph (0) d\xi dx \nonumber \\
& & + \int_0^T \int_D \int_\mR a (\xi ) f^\ep \nabla \ph d\xi dx ds + \ep \int_0^T \int_D \int_\mR f^\ep \Delta \ph d\xi dx dt \nonumber \\
& & + \sum_{k=1}^\infty \int_0^T \int_D \int_\mR g_k^\ep \ph d \nu_{t,x}^\ep (\xi) dx d\beta_k (t) \nonumber \\
& & + \frac{1}{2} \int_0^T \int_D \int_\mR \pa_\xi \ph G_\ep^2 d \nu_{t,x}^\ep (\xi ) dx dt - m^\ep (\pa_\xi \ph), \label{formulation}
\end{eqnarray}where $f_0 (\xi) = {\bf 1}_{u_0 > \xi}$, $\nu_{t,x}^\ep =- \de_{u^\ep (t,x)} $ and
\begin{gather*}
 m^\ep (\phi ) = \int_0^T \int_D \phi (t,x,u^\ep (t,x)) \ep \lab \nabla u^\ep \rab^2 dx dt 
\end{gather*}for $\phi \in C_b ([0,T] \times D \times \mR)$. It follows from (\ref{energy4}) and the uniform boundedness of $\ti{u}^\ep$ in $\ep$ that for $t \in [0,T]$,
\begin{gather}
\mE \int_D \int_\mR \lab \xi \rab^p d \nu_{t,x}^\ep (\xi) dx \leq C. \label{timeuni}
\end{gather}We need the following compactness result. (For the proof see \cite{a}.)

\begin{theorem} \label{T43} {\it Let $\{ \ep_n \} \downarrow 0$. Suppose}
\begin{gather}
\lim_{R \to \infty} \sup_n \mE \int_0^T \int_D \int_{B_R^c} d\nu_{t,x}^{\ep_n} (\xi) dx dt = 0. \label{kyokugenn} 
\end{gather}
\begin{description}
\item[${\rm (a)}$] \hs{2mm} {\it There exist a Young measure $\nu$ on $\Omega \times Q$ and a subsequence still denoted $ \{ \nu^{\ep_n} \}$ such that for all $h \in L^1 (\Omega \times Q)$, for all $\phi \in C_b (\mR)$ }
 \begin{eqnarray}
  & & \lim_{n \to \infty} \mE \int_0^T \int_D h (t,x) \int_\mR \phi (\xi) d \nu_{t,x}^{\ep_n} (\xi) dx dt \nonumber \\
  & & \hs{4mm} = \mE \int_0^T \int_D h(t,x) \int_\mR \phi (\xi) d\nu_{t,x} (\xi) dx dt.  \label{shuusoku}
 \end{eqnarray}
\item[${\rm (b)}$] \hs{2mm} {\it There exist a kinetic function $f$ on $\Omega \times Q \times \mR$ and a subsequence still denoted $\{ f^{\ep_n} \}$ such that $f^{\ep_n} \rightharpoonup f $ in $L^\infty (\Omega \times Q \times \mR)$-weak*.}
\end{description}
\end{theorem}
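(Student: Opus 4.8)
The plan is to read Theorem~\ref{T43} as a compactness statement for Young measures together with their associated kinetic functions, in the spirit of \cite{a}. I would produce the two limits separately by Banach--Alaoglu and then reconcile them using the tightness hypothesis (\ref{kyokugenn}). Since the kinetic functions $f^{\ep_n} = {\bf 1}_{u^{\ep_n} > \xi}$ satisfy $0 \le f^{\ep_n} \le 1$, they form a bounded sequence in $L^\infty (\Omega \times Q \times \mR) = (L^1 (\Omega \times Q \times \mR))^*$; by weak-* sequential compactness (using separability of $L^1$) I extract a subsequence, still denoted $\{ f^{\ep_n} \}$, with $f^{\ep_n} \rightharpoonup f$ weak-*, where $0 \le f \le 1$ a.e. This already gives the extraction in (b); it remains only to identify $f$ as a kinetic function.

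Next I would regard $z = (\omega , t , x) \mapsto \nu^{\ep_n}_z$ as an element of the dual of the separable space $L^1 (\Omega \times Q ; C_0 (\mR))$ and invoke Banach--Alaoglu again to obtain a weak-* limit $\nu$, so that (\ref{shuusoku}) holds for every $\phi \in C_0 (\mR)$ and $h \in L^1 (\Omega \times Q)$. At this stage $\nu_z$ is a priori only a sub-probability measure, and this is precisely where the tightness condition (\ref{kyokugenn}) enters: it forces $\nu_z (\mR) = 1$ for a.e. $z$, so that $\nu$ is a genuine Young measure, and it lets me upgrade (\ref{shuusoku}) from $\phi \in C_0 (\mR)$ to all $\phi \in C_b (\mR)$ by splitting such a $\phi$ into a part supported in a large ball plus a uniformly small tail. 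This establishes (a). The moment bound (\ref{Young}) for $\nu$ then follows by applying Fatou's lemma along the extracted subsequence to the uniform estimate (\ref{timeuni}), testing against the truncations $|\xi|^p \wedge M$ and letting $M \to \infty$.

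The final step is to identify $f$ with $\nu$. Since $f^{\ep_n} (z, \xi) = \nu^{\ep_n}_z (\xi , + \infty)$, integration by parts in $\xi$ against any $\psi \in C_c^1 (\mR)$ gives $\int_\mR f^{\ep_n} (z,\xi) \psi ' (\xi) \, d\xi = - \int_\mR \psi (\xi) \, d\nu^{\ep_n}_z (\xi)$; multiplying by $h(z)$, integrating over $\Omega \times Q$, and passing to the limit through the two convergences above yields $\int_\mR f (z,\xi) \psi ' (\xi) \, d\xi = - \int_\mR \psi (\xi) \, d\nu_z (\xi)$ for a.e. $z$. Hence $- \pa_\xi f (z , \cdot) = \nu_z$ in $\mathcal{D}' (\mR)$, and since $\nu_z$ is a probability measure and $f(z,\cdot)$ inherits from the $f^{\ep_n}$ the monotonicity and boundary behaviour $f(z,+\infty)=0$, $f(z,-\infty)=1$, this is exactly $f (z,\xi) = \nu_z (\xi , + \infty)$ for a.e. $z$. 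Thus $f$ is a kinetic function with associated Young measure $\nu$, completing (b).

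I expect the main obstacle to be the control at $\xi = \pm \infty$. Weak-* convergence does not commute with pointwise evaluation, so the whole argument has to be carried out in the tested (averaged) formulation, and the monotonicity of each $f^{\ep_n}$ in $\xi$ must be transferred to the limit to guarantee that $f$ defines a bona fide distribution function. The role of (\ref{kyokugenn}) is exactly to rule out escape of mass to infinity: without it the limit $\nu_z$ could be a strict sub-probability measure and, correspondingly, $f$ could fail the defining relations $f(z,+\infty)=0$ and $f(z,-\infty)=1$, so that $f$ would not be a kinetic function at all.
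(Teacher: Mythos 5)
Your argument is correct, and it is essentially the standard compactness proof: the paper itself gives no proof of Theorem~\ref{T43} but defers to \cite{a}, where the result is established exactly along the lines you describe (weak-* extraction in $(L^1(\Omega\times Q; C_0(\mR)))^*$ and in $L^\infty$, tightness via (\ref{kyokugenn}) to prevent loss of mass, and identification of $f$ with $\nu$ through the distributional relation $\pa_\xi f(z,\cdot)=-\nu_z$ plus monotonicity). The only points worth spelling out in a final write-up are the density step needed to pass from bounded $h$ to general $h\in L^1(\Omega\times Q)$ when upgrading $\phi$ from $C_0$ to $C_b$, and the choice of a countable family of test functions $\psi$ so that the identification holds outside a single null set of $z$.
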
Here we take notice that as stated in \cite{aa} we may assume that the $\sigma$-algebra $\mathscr{F}$ is countably generated and hence $L^1 (\Omega \times Q)$ is separable. \\

\noindent {\it Proof of Theorem \ref{existence1}} \ For $\de > 0$ sufficiently small we define 
\begin{gather*}
s(x) = 
\left\{ \begin{array}{ll}
\min \{ {\rm dist} (x , \pa D) , \de \} &  \hs{1mm} {\rm for} \ x \in D, \\
- \min \{ {\rm dist} (x,\pa D) , \de  \} & \hs{1mm} {\rm for} \ x \in \mR^d \setminus D . \\
\end{array} \right.
\end{gather*}The function $s$ is Lipschitz continuous in $\mR^d$ and smooth on the closure of $\{ x \in \mR^d ; \lab s(x) \rab < \de \}$. For $\de > 0$ define the function $\Theta_\ep$ by 
\begin{gather*}
\Theta_\ep (x) = 1 - {\rm exp} \la - \frac{M+\ep L}{\ep} s(x) \ra,
\end{gather*}where $M > 0$ and $L = \sup_{0 < s(x) < \de} \lab \Delta s(x) \rab$. This function satisfies the weak differential inequality
\begin{gather}
 M \int_D \lab \nabla \Theta_\ep \rab \phi dx \leq \ep \int_D \nabla \Theta_\ep \cdot \nabla \phi dx + (M + \ep L) \int_{\pa D} \phi d\si \label{kkklll}
\end{gather}for any $\phi \in C_c^\infty (\mR^d) $. (See \cite[p. 129]{g}.) Let $N > 0$ and let us fix any non-negative $\ph \in C_c^\infty ([0,T) \times \mR^d \times (-N,N))$. We regularize (\ref{formulation}) by convolution. Namely, we apply (\ref{formulation}) to the test function $(\Theta^+_\ep \ph^\lam )* \rho^\lam_\eta$, where $\Theta^+_\ep (x) = \max \{ \Theta_\ep (x), 0 \}$, $\ph^\lam = \ph \lam$ and $(\rho^\lam_\eta)$ denotes the right-decentered approximations to the identity on $\mR^d$, to obtain
\begin{eqnarray}
& & \hs{-6mm} -\int_0^T \int_D \int_\mR f^{\ep ,\eta} \Theta^+_\ep \pa_t \ph^\lam d\xi dx dt \nonumber \\
& & \hs{-6mm} = \int_D \int_\mR f_0^\eta \Theta^+_\ep \ph^\lam (0) d\xi dx + \int_0^T \int_D \int_\mR a (\xi ) f^{\ep ,\eta} \nabla ( \Theta^+_\ep \ph^\lam ) d\xi dx dt \nonumber \\ 
& & + \ep \int_0^T \int_D \int_\mR f^{\ep ,\eta} \Delta (\Theta^+_\ep \ph^\lam) d\xi dx dt \nonumber \\
& & + \sum_{k=1}^\infty \int_0^T \int_D \int_\mR g_k^\ep (\Theta^+_\ep \ph^\lam )* \rho^\lam_\eta  d \nu_{t,x}^\ep (\xi) dx d\beta_k (t) \nonumber \\
& & + \frac{1}{2} \int_0^T \int_D \int_\mR \pa_\xi (\Theta^+_\ep \ph^\lam) * \rho^\lam_\eta G_\ep^2 d \nu_{t,x}^\ep (\xi ) dx dt  - m^\ep (\pa_\xi (\Theta^+_\ep \ph^\lam)* \rho^\lam_\eta), \nonumber
\end{eqnarray}where $f^{\ep , \eta} = f^\ep * \check{\rho}^\lam_\eta$, $f_0^\eta = f_0 * \check{\rho}^\lam_\eta $ and $\check{\rho}^\lam_\eta (x) = \rho^\lam_\eta (-x)$. Setting 
\begin{gather*}
M_N = \sup_{\xi \in [-N,N]} \lab a(\xi) \rab,
\end{gather*}we calculate
\begin{eqnarray*}
& & \int_0^T \int_D \int_\mR a (\xi) f^{\ep , \eta} \nabla (\Theta^+_\ep \ph^\lam) d \xi dx dt + \ep \int_0^T \int_D \int_\mR f^{\ep , \eta} \Delta (\Theta^+_\ep \ph^\lam) d\xi dx dt \\
& & \leq M_N \int_0^T \int_D \int_\mR f^{\ep , \eta} \lab \nabla \Theta^+_\ep \rab \ph^\lam d \xi dx dt + \int_0^T \int_D \int_\mR a (\xi) f^{\ep , \eta} \Theta^+_\ep \nabla \ph^\lam d \xi dx dt \\
& & + \ep \int_0^T \int_D \int_\mR \left( - \nabla ( f^{\ep , \eta} \ph^\lam ) \nabla \Theta^+_\ep + 2 f^{\ep, \eta} \nabla \Theta^+_\ep \nabla \ph^\lam + \Theta_\ep^+ \Delta \ph^\lam \right) d\xi dx dt \\
& & \leq (M_N + \ep L) \int_{\pa D} f_b^\eta \ph^\lam d\si +  \int_0^T \int_D \int_\mR a (\xi) f^{\ep , \eta} \Theta^+_\ep \nabla \ph^\lam d \xi dx dt \\
& & + \ep \int_0^T \int_D \int_\mR \left( 2 f^{\ep, \eta} \nabla \Theta^+_\ep \nabla \ph^\lam + \Theta_\ep^+ \Delta \ph^\lam \right) d\xi dx dt.
\end{eqnarray*}Here we used (\ref{kkklll}) with $M$ and $\ph$ replaced by $M_N$ and $f^{\ep , \eta} \ph^\lam$, respectively. Letting $\eta \downarrow 0$ and summing over $i$, we obtain 
\begin{eqnarray}
& & \hs{-6mm} -\int_0^T \int_D \int_{-N}^N \Theta_\ep f^{\ep} ( \pa_t + a(\xi) \cdot \nabla) \ph d\xi dx dt \nonumber \\
& & \hs{3mm} - \int_D \int_{-N}^N \Theta_\ep f_0 \ph (0) d\xi dx - (M_N + \ep L) \int_0^T \int_{\pa D} \int_{-N}^N f^b \ph d\xi d\si dt \nonumber \\
& & \hs{-6mm} \leq \ep \int_0^T \int_D \int_{-N}^N \Theta_\ep f^{\ep} \Delta \ph d\xi dx dt + 2\ep \int_0^T \int_D \int_{-N}^N f^\ep \nabla \ph \cdot \nabla \Theta_\ep d\xi dx dt \nonumber \\
& & \hs{3mm} + \sum_{k=1}^\infty \int_0^T \int_D \int_{-N}^N \Theta_\ep g_k^\ep \ph d \nu_{t,x}^\ep (\xi) dx d\beta_k (t) \nonumber \\
& & \hs{3mm} + \frac{1}{2} \int_0^T \int_D \int_{-N}^N \Theta_\ep \pa_\xi \ph G_\ep^2 d \nu_{t,x}^\ep (\xi ) dx dt - m^\ep (\Theta_\ep \pa_\xi \ph ). \label{kinetic futousiki}
\end{eqnarray}Note here that $\sup_{\ep \in (0,1]} \int_D \lab \nabla \Theta_\ep \rab dx \leq C < \infty $ by (\ref{kkklll}). We will pass $\ep $ to $0$ in (\ref{formulation}) and (\ref{kinetic futousiki}) through a subsequence $\{ \ep_n \}$. By virtue of (\ref{bounded}) with $p=2$ we have, up to subsequence, $m^{\ep_n}$ converges to a kinetic measure $m$ in $L^2_w (\Omega ; {\mathcal M}_b)$-weak*, where $L^2_w (\Omega ; \mathcal{M}_b)$ is the space of all weak*-measurable mappings $m: \Omega \to \mathcal{M}_b$ with $\mathbb{E} \| m \|_{\mathcal{M}_b}^2 < \infty$. It is shown in \cite{a} that $m$ satisfies (\ref{vanish}) and the process $t \mapsto \int_{(0,t) \times D \times \mR} \ph (x,\xi) dm (s,x,\xi) $ is predictable for any $\phi \in C_b (D \times \mR)$. \\
\indent Since (\ref{timeuni}) immediately implies (\ref{kyokugenn}), by Theorem \ref{T43} there exist a kinetic function $f_+$, a Young measure $\nu_{t,x}$ and a kinetic measure $m$ which satisfy: For any $\phi \in C_c^\infty ([0,T) \times D \times (-N,N))$, 
\begin{eqnarray}
    & & \hs{-8mm} -\int_0^T \int_D \int_{-N}^N f_+ (\pa_t + a \cdot \nabla) \phi \hs{0.5mm} d\xi dx dt - \int_{D}  \int_{-N}^N f^0_+ \phi(0) \hs{0.5mm} d\xi dx \nonumber \\
    & & \hs{-8mm} =  \sum_{k = 1}^\infty \int_0^T \int_D \int_{-N}^N g_k \hh \phi \hh d \nu_{t,x} (\xi) dx d\beta_k(t) \nonumber \\
    & & \hs{-8mm} +\frac{1}{2} \int_0^T \int_D \int_{-N}^N \pa_{\xi} \phi \hh G^2 \hh d \nu_{t,x} (\xi) dx dt \nonumber \\
    & &  \hs{-8mm} -\int_{(0,T) \times D \times (-N,N)} \pa_{\xi} \phi \hs{0.5mm} dm(t,x,\xi),  \hs{2.5mm} {\rm a.s.}, \label{GKIFO4}
\end{eqnarray}and for any $\ph \in C_c^\infty ([0,T) \times \mR^d \times (-N,N))$ with $\ph \geq 0$, 
\begin{eqnarray}
    & & -\int_0^T \int_D \int_{-N}^N f_+ (\pa_t + a \cdot \nabla) \ph \hs{0.5mm} d\xi dx dt \nonumber \\
    & &  - \int_{D}  \int_{-N}^N f^0_+ \ph(0) \hs{0.5mm} d\xi dx - M_N \int_0^T \int_{\pa D} \int_{-N}^N f^b \ph d\xi d\si dt \nonumber \\
    & & \leq  \sum_{k = 1}^\infty \int_0^T \int_D \int_{-N}^N g_k \hh \ph \hh d \nu_{t,x} (\xi) dx d\beta_k(t) \nonumber \\
    & &  +\frac{1}{2} \int_0^T \int_D \int_{-N}^N \pa_{\xi} \ph \hh G^2 \hh d \nu_{t,x} (\xi) dx dt \nonumber \\
    & &  -\int_{(0,T) \times D \times (-N,N)} \pa_{\xi} \ph \hs{0.5mm} dm(t,x,\xi),  \hs{2.5mm} {\rm a.s.}, \label{GKIFO5}
\end{eqnarray} \indent Now, take any $\ph \in C_c^\infty ([0,T) \times \mR^d \times (-N,N))$. Let $\lam_i$ be an element of the partition of unity on $\ov{D}$. For $i \geq 1 $ and $\eta >0$, define $\bar{\Theta}_\eta (x) = \int_0^{x_d - h_\lam (\bar{x})} \psi_\eta (r - \eta (L_\lam + 1)) dr $, where we have again dropped the index $i$ of $\lam_i$. We apply (\ref{GKIFO4}) to the test function $\phi = \bar{\Theta}_\eta \ph^\lam$ and let $\eta \downarrow 0$ in the resultant equality. Then
\begin{eqnarray}
    & & \hs{-11mm} -\int_0^T \int_{D^\lam} \int_{-N}^N f_+ (\pa_t + a \cdot \nabla) \ph^\lam \hs{0.5mm} d\xi dx dt - \int_{D^\lam}  \int_{-N}^N f^0_+ \ph^\lam(0) \hs{0.5mm} d\xi dx \nonumber \\
    & &  - \int_0^T \int_{\Pi^\lam} \int_{-N}^N (-a (\xi) \cdot {\bf n} (\bar{x}) ) \bar{f}^{(\lam)}_+ \ph^\lam d\xi d \bar{\si} (\bar{x}) dt  \nonumber \\
    & & \hs{-11mm} =  \sum_{k = 1}^\infty \int_0^T \int_{D^\lam} \int_{-N}^N g_k \hh \ph^\lam \hh d \nu_{t,x} (\xi) dx d\beta_k(t) \nonumber \\
    & &  +\frac{1}{2} \int_0^T \int_{D^\lam} \int_{-N}^N \pa_{\xi} \ph^\lam \hh G^2 \hh d \nu_{t,x} (\xi) dx dt \nonumber \\
    & &  -\int_{(0,T) \times D^\lam (-N,N)} \pa_{\xi} \ph^\lam \hs{0.5mm} dm(t,x,\xi),  \hs{5mm} {\rm a.s.}, \label{GKIFO6}
\end{eqnarray}where recall that $\bar{f}^{(\lam)}_+$ denotes any weak* limit of $\int_{D^\lam} f_+ (t,y,\xi ) \rho_\eta (x_d - h_\lam (\bar{x})) dx $ as $\eta \downarrow 0$ in $L^\infty (\Omega \times \Si^\lam \times \mR)$. Since $L^1 (\Omega \times \Si^\lam \times \mR)$ is separable as was mentioned before, the predictability is stable under the weak* topology of $L^\infty (\Omega \times \Si^\lam \times \mR)$. In particular, $f^{(\lam)}_\pm$ is predictable. Combining (\ref{GKIFO5}) with (\ref{GKIFO6}) yields 
\begin{gather*}
\int_0^T \int_{\Pi^\lam} \int_{-N}^N (M_N f^b + (a \cdot {\bf n}) \bar{f}^{(\lam)}_+  ) \bar{\ph} d\xi d \bar{\si} ( \bar{x} ) dt \geq 0
\end{gather*}for all $\bar{\ph} \in C_c^\infty ([0,T ) \times \Pi^\lam \times (-N,N))$ with $\bar{\ph} \geq 0$, and hence $M_N f^b + (a \cdot {\bf n}) \bar{f}^{(\lam)}_+ \geq 0 $ for a.e. $(t,\bar{x}, \xi) \in (0,T) \times \Pi^\lam \times (-N,N) $. For $N > \lno u_b \rno_{L^\infty}$ we set 
\begin{gather}
 \bar{m}_N^{+,\lam} (t,x,\xi ) = M_N (u_b (t,x) - \xi)^+ - \int_\xi^N (-a (\eta)\cdot {\bf n} (x)) \bar{f}^{(\lam)}_+ (t,x,\eta) d\eta
\end{gather}for $ ( t,\bar{x} , \xi ) \in \Si^\lam \times (-N,N) $. Clearly, $\bar{m}^+_N \geq 0$ and $\bar{m}^+_N (N) = 0$. Since $(-a (\xi) \cdot {\bf n}) \bar{f}^{(\lam)}_+ = M_b f_+^b + \pa_\xi \bar{m}_N^+$, it follows from (\ref{GKIFO6}) that
\begin{eqnarray}
    & & \hs{-11mm} -\int_0^T \int_{D^\lam} \int_{-N}^N f_+ (\pa_t + a \cdot \nabla) \ph^\lam \hs{0.5mm} d\xi dx dt \nonumber \\
    & &  - \int_{D^\lam}  \int_{-N}^N f^0_+ \ph^\lam(0) \hs{0.5mm} d\xi dx  - M_N \int_0^T \int_{\Pi^\lam} \int_{-N}^N f_+^b \ph^\lam d\xi d \bar{\si} ( \bar{x} ) dt  \nonumber \\
    & & \hs{-11mm} =  \sum_{k = 1}^\infty \int_0^T \int_{D^\lam} \int_{-N}^N g_k \hh \ph^\lam \hh d \nu_{t,x} (\xi) dx d\beta_k(t) \nonumber \\
    & &  +\frac{1}{2} \int_0^T \int_{D^\lam} \int_{-N}^N \pa_{\xi} \ph^\lam \hh G^2 \hh d \nu_{t,x} (\xi) dx dt \nonumber \\
    & &  -\int_{(0,T) \times D \times (-N,N)} \pa_{\xi} \ph^\lam \hs{0.5mm} dm - \int_0^T \int_{\Pi^\lam} \int_{-N}^N \pa_\xi \ph^\lam d\bar{m}^{+,\lam}_N,  \hs{5mm} {\rm a.s.}. \nonumber
\end{eqnarray}By summing over $i = 0,1, \ldots , M$ we obtain the kinetic formulation (\ref{GKIFO}) for $f_+$ with
\begin{eqnarray*}
 \bar{m}_N^+ (t,x,\xi) = M_N (u_b (t,x) - \xi)^+ - \int_\xi^N (-a(\eta) \cdot {\bf n} (x) ) \bar{f}_+ (t,x,\eta) d\eta.
\end{eqnarray*}
In a similar manner, we can also obtain the kinetic formulation (\ref{GKIFO}) for $f_-$. \\

\noindent {\it Proof of Theorem \ref{existence2}} \ When $u_b = 0$, $\ti{u}^\ep$ becomes $0$ identically. Therefore, there is no need to assume the boundedness of $A''$ in the argument of the previous theorem. \\

\noindent {\bf Acknowledgment} \ The authors wish to thank the referees for their comments helping to improve the manuscript.

\end{document}